\documentclass[a4paper,12pt,reqno]{amsart}

\usepackage{
amsfonts,
amsmath,
amsopn,
amssymb,
amsthm,
bbm,
bbold,
dsfont,
enumitem,
graphicx,
mathrsfs,
mathtools,
soul,
subfig,
verbatim,
xcolor,
xspace,
tikz,
mathabx,
}

\usetikzlibrary{decorations.markings,decorations.pathmorphing,arrows.meta}

\usepackage{geometry}
\usepackage[utf8]{inputenc}
\inputencoding{latin1}
\inputencoding{utf8}

\mathtoolsset{showonlyrefs}

\usepackage{dirtytalk}

\numberwithin{equation}{section}

\theoremstyle{plain} 
\newtheorem{thm}{Theorem}[section] 
\newtheorem{cor}[thm]{Corollary} 
 
\newtheorem{prop}[thm]{Proposition} 
\theoremstyle{definition}

\theoremstyle{definition} 

\theoremstyle{remark} 
\newtheorem{rem}[thm]{Remark}

\newcommand{\f}[2]{\frac{#1}{#2}}

\newcommand{\p}{\partial}

\DeclareMathOperator*{\Tr}{Tr_\pi}

\newcommand{\N}{\mathbb{N}}
\newcommand{\NN}{\mathscr{N}}
\newcommand{\R}{\mathbb{R}}
\newcommand{\C}{\mathbb{C}}
\newcommand{\D}{\mathcal{D}}
\newcommand{\T}{\mathcal{T}}

\newcommand{\F}{\mathcal{F}}

\newcommand{\x}{\mathbf{x}}
\newcommand{\y}{\mathbf{y}}

\renewcommand{\k}{\mathbf{k}}
\renewcommand{\p}{\mathbf{p}}
\newcommand{\mm}{\mathbf{m}}
\newcommand{\nn}{\mathbf{n}}

\newcommand{\E}{\mathcal{E}}
\newcommand{\B}{\mathcal{B}}

\renewcommand{\S}{\mathcal{S}}
\newcommand{\G}{\mathcal{G}}

\newcommand{\ve}{\varepsilon}

\newcommand{\al}{\alpha}

\newcommand{\Ga}{\Gammal}
\newcommand{\de}{\delta}
\newcommand{\beq}{\begin{equation}}
\newcommand{\eeq}{\end{equation}}
\newcommand{\lf}{\left}
\newcommand{\ri}{\right}

\newcommand{\n}{\noindent}
\newcommand{\vs}{\vspace{0.5cm}}

\newcommand{\la}{\lambda}
\renewcommand{\Ga}{\Gamma}
\newcommand{\ome}{\omega}

\renewcommand{\leq}{\leqslant}
\renewcommand{\geq}{\geqslant}

\title{On  Fermi's model for the  scattering of a slow neutron  from a bound proton
}

\author[D. Finco]{Domenico Finco}
\address[D. Finco]{Universit\`a Telematica Internazionale Uninettuno, Facolt\`a di Ingegneria, corso Vittorio Emanuele II, 00186, Roma, Italy.}
\email{d.finco@uninettunouniversity.net}

\author[R. Scandone]{Raffaele Scandone}
\address[R. Scandone]{Universit\`a degli Studi di Napoli ``Federico II'', Dipartimento di Matematica e Applicazioni ``R. Caccioppoli'', Via Cintia, Monte S. Angelo, 80126, Napoli, Italy}
\email{raffaele.scandone@unina.it}

\author[A. Teta]{Alessandro Teta}
\address[A. Teta]{Universit\`a degli Studi di Roma ``La Sapienza'', Dipartimento di Matematica ``G. Castelnuovo'', Piazzale Aldo Moro 5, 00185, Roma, Italy}
\email{teta@mat.uniroma1.it}

\date{}

\thanks{This work has been partially financed by European Union - Next Generation EU, Project MUR-PRIN 2022, ref. n. 2022CHELC7. The authors also acknowledge the support of the GNFM Gruppo Nazionale per la Fisica Matematica - INdAM 
and of the GNAMPA Gruppo Nazionale per l'Analisi Matematica, la Probabilit\`a e le loro Applicazioni - INdAM}

\begin{document}

\keywords{LAP, stationary scattering theory, point interactions}
\subjclass{35Q40, 81Q10,81Q15}
\maketitle

\begin{abstract}
We consider a model Hamiltonian, introduced by Fermi in 1936, describing a two-particle system  made of a neutron and a harmonically bound proton, where the neutron-proton interaction has the form of a $\delta$-potential. For such Hamiltonian we prove the Limiting Absorption Principle and describe the stationary scattering theory. Finally, we derive Fermi's formula for the scattering cross-section valid in the Born approximation. 
\end{abstract}

\vs

\section{Introduction}

In the 1930s, Fermi and his research group in Rome were involved 
in fundamental experimental studies on radioactivity induced by slow neutrons. However, as is well known, a typical aspect of Fermi's scientific personality was to combine experimental activity with the development of theoretical models capable of clarifying and explaining the experimental results. Following this approach, in 1936 Fermi published an article  (\cite{F}) where he proposed a model for analysing the scattering of slow neutrons by protons. 
Taking into account that the proton-neutron interaction is extremely intense and its range is much shorter than the wavelength of the incident neutron, Fermi assumed that the interaction could be reasonably described by a Dirac delta function. Then he considered separately the three cases of a fixed proton, a free proton and a proton harmonically bound around an equilibrium position and he computed in each case the scattering cross-section in the Born approximation. In particular, in the case of a collision between the neutron and a harmonically bound proton he found the following   scattering cross-section (see also \cite{L} p. 644):

\begin{equation}\label{fermi}
\f{d \sigma_{\nn}}{d{\Omega}} = \f{4 a^2 }{ |\nn|!} \, \f{|\p|}{|\p_0|}  \left( \f{  (\p-\p_0)^2}{2 m \hbar \omega} \right)^{\!|\nn|} e^{- \f{ (\p-\p_0)^2}{ 2 m \hbar \omega}}  .
\end{equation}

\n
In \eqref{fermi}  the oscillator undergoes the transition from the ground state to the state labeled by $\nn=(n_1,n_2,n_3)$, with $|\nn|=n_1+n_2+n_3$, $a$ is a constant known as scattering length,    $\p_0, \p$ are initial and final momenta of the neutron, $m$ is the mass  and $\omega$ is the frequency of the oscillator. 
In the special case of elastic collision, formula \eqref{fermi}
reduces to:

\begin{equation}\label{fermi2}
\f{d \sigma_{\mathbf 0}}{ d{\Omega}} = 4 a^2\, e^{- \f{(\p-\p_0)^2}{ 2 m \hbar \omega}} \Big|_{|\p|=|\p_0|} .
\end{equation}

\n
From \eqref{fermi2}, one also compute the total cross-section 

\begin{equation}\label{fermi3}
\sigma_{\mathbf 0}  = 4 \pi a^2\,  \f{\hbar \omega}{E} \left( 1 - e^{- \f{4 E}{\hbar \omega} } \right),  \end{equation}

\n
where $E$ is the energy of the neutron. The above formulas show the explicit dependence of the collision process on the frequency of the oscillator, and therefore on the motion of the proton. 
It is important to emphasize once again that Fermi's approach was perturbative and, due to the singular nature of the interaction, could not go beyond Born's approximation. 

\n
In the same years, Bethe and Peierls (\cite{BP}) developed a complete theory, i.e. valid at any perturbative order, for the case of a fixed proton and later Berezin and Faddeev (\cite{BF}) provided a rigorous mathematical construction of the Hamiltonian (see also \cite{Al} for a comprehensive treatment of the whole subject). 

\n
Here we are interested in the case of a harmonically bound proton. The corresponding Hamiltonian can be formally written as:

\begin{equation}\label{ht}
\widetilde{H} = -\Delta_{\x} - \Delta_{\y} + \f{1}{4} \, \omega^2 |\y|^2 - \f{3}{2} \,\omega + \delta (\x-\y),
\end{equation}

\n
where $\x, \y$ denote the coordinates of the neutron and the proton respectively. For simplicity, we have fixed $\hbar =1$, the masses of both particles equal to $1/2$ and the ground state energy of the oscillator equal to zero.  
Notice that in \eqref{ht} the unperturbed Hamiltonian is 

\begin{equation}\label{frh}
H_0= -\Delta_{\x} - \Delta_{\y} + \f{1}{4} \, \omega^2 |\y|^2 - \f{3}{2} \,\omega,
\end{equation}

\n
and the interaction term is non trivial only on the coincidence hyperplane 

\begin{equation}\label{hypl}
\pi= \big\{ (\x,\y)\in \R^6 \,|\, \x=\y \big\}.
\end{equation}

\n
A rigorous counterpart $H, \D(H)$ of \eqref{ht}  as a well defined self-adjoint and bounded from below operator in $L^2(\R^6)$ was constructed  in \cite{FDT} (see also \cite{CDF}) using a renormalisation technique and the theory of quadratic forms. Roughly speaking, the Hamiltonian  $H, \D(H)$ is characterised by the following two conditions 

%
%
%

\vs

\begin{itemize}
   \item[(C-1)] $\;H\psi=H_0\psi$ for any $\psi \in \D(H)$ such that $\text{Tr}_\pi \psi =0$,  
where  $\text{Tr}_\pi  $ denotes the trace on $\pi$; 

\vspace{0.2cm}

    \item[(C-2)] $\; \psi \in \D(H)$ satisfies the boundary condition on $\pi$
   \begin{equation}\label{BC}
\psi(\x,\y)= \f{\xi(\y)}{|\x-\y|} + \alpha\, \xi(\y) + o(1) \;\;\;\;\;\; \text{for}\;\; \x \to \y,
    \end{equation}
  where
    \begin{equation}
    \xi(\y):=  \lim_{\x \to \y} |\x-\y| \psi(\x,\y) ,
    \end{equation}
   and $\alpha\-\in\-\R$ is the inverse of the two-body scattering length.
\end{itemize}

\vs

\n
Condition (C-1)  expresses that fact that the interaction takes place only when the positions of the neutron and the proton coincide,   
while (C-2) is the boundary condition imposed on $\pi$ and it is the direct generalisation of the Bethe and Peierls boundary condition for the case of fixed proton. 

\n
The aim of this work is to show the validity of the Limiting Absorption Principle (LAP) and to develop the stationary scattering theory for $H, \D(H)$. As a final result, we also derive Fermi's formula \eqref{fermi}  in the Born approximation. 

\n
In the next Section we will recall the rigorous definition and the main properties of $H, \D(H)$ and then we will give the precise formulation of our main results.

\vs
\n
For the convenience of the reader, we collect here some notation used throughout the paper. 
We denote the integer numbers with zero included by $\N_0 = \N \cup \{0\}$. We adopt a unitary definition of the
Fourier transform in $L^2(\R^d)$
$$
\hat{f} (\k) = \f{1}{(2 \pi)^{d/2}} \int_{\R^d}\! \! d \x \, e^{- i \k \cdot \x} \, f(\x).
$$
Weighted Sobolev spaces are defined in the following way:
\begin{align}
&L^2_s(\R^d) = \{ f: \R^d\to \C\, | \langle x \rangle^s f \in L^2(\R^d)     \} \qquad \|f \|_{L^2_s (\R^d)}=   \| \langle x \rangle^s f \|_{L^2 (\R^d)}       \;\; s\in\R  ,   \nonumber\\
& H^2_s(\R^d) = \{ f: \R^d\to \C\, | f,\, \Delta f \in L^2_s(\R^d)     \}       \qquad \|f \|_{H^2_s (\R^d)}=    \|f \|_{L^2_s (\R^d)}+ \|\Delta f \|_{L^2_s (\R^d)}     .      \nonumber
\end{align}
Given two Hilbert spaces $\mathcal K_1$ and $\mathcal K_2$, we denote by $\B(\mathcal K_1, \mathcal K_2)$ the space of bounded linear operators 
from $\mathcal K_1$ to $\mathcal K_2$ equipped with the uniform topology, while $\B(\mathcal K_1)$ stands for $\B(\mathcal K_1, \mathcal K_1)$.
We also denote the closed ideal of compact operators in $\B(\mathcal K_1)$ by $\B_\infty (\mathcal K_1)$.

\n
For a free particle in $L^2(\R^3)$ we have
\begin{align} \label{pasta1}
&h_0=-\Delta_{\x} \,, \;\;\;\;\;\; \D(h_0)= H^2(\R^3) \,, \;\;\;\;\;\; r_0(z)=(h_0 -z)^{-1} \,, \;\;\;\;\;\; z \in \C \setminus \overline{\R^+}.
\end{align}

\n
For an harmonic oscillator in $L^2(\R^3)$ we have
\begin{align} \label{pasta2}
&h_{\omega}=-\Delta_{\y} + \f{1}{4} \, \omega^2 |\y|^2 - \f{3}{2} \,\omega\,, \;\;\;\;\;\; \D(h_{\omega})= H^2(\R^3)\cap L^2_2 (\R^3) \,, \;\;\;\;\;\; r_{\omega}(z)=(h_{\omega} -z)^{-1} \,, \;\;\;\;\;\; 
z \in \C \setminus \N_0.
\end{align}
Notice that the characterization of $\D(h_{\omega})$ follows from the separation property, see \cite{DG, Eva,Eve}.
For  $ \nn =(n_1,n_2,n_3) \in \N_0^3\,,$ $\; \phi_{\nn}$ is a normalized eigenfunction of  $  h_{\omega}$ corresponding to the eigenvalue 
$\omega |\nn|$, where $|\nn|= (n_1+n_2+n_3) \,.$ The degeneracy of the eigenvalue $\,\omega |\nn|\,$ is $\, \f{1}{2} (|\nn|+1)(|\nn| +2)$.

\vs

\section{Model Hamiltonian and main results}

\noindent
We first recall the definition and  some properties of our model Hamiltonian (see  \cite{FDT}, \cite{CDF}). Let us denote the free resolvent
\begin{equation}
R_0(z)=(H_0 -z)^{-1} \,, \;\;\;\;\;\; z \in \C \setminus \overline{\R^+}
\end{equation}

\n
and the corresponding integral kernel
\begin{align}
&G^z (\x,\y;\x',\y')= \!\sum_{\nn \in \N^3_0} r_0(z - \omega |\nn|) (\x,\x') \,\phi_{\nn}(\y) \phi_{\nn} (\y')
\!=\! \sum_{\nn \in \N^3_0} \int_{\R^3} \!\!d\k\, \f{1}{|\k|^2 \!+\! \omega |\nn|\! -\!z} \f{e^{i \k \cdot (\x-\x')}}{(2\pi)^{3}} \,\phi_{\nn}(\y) \phi_{\nn} (\y').
\end{align}
%
Let us also introduce the quadratic form in $L^2(\R^3)$ for any $\lambda >0$

\begin{align}
&\D (\Phi^{\lambda})= H^{1/2} (\R^3) \cap L^{2}_{1/2} (\R^3)\\
&\Phi^\la (\xi ) = \f12 \int_{\R^6} G^{-\lambda} (\y , \y ; \y' , \y' ) | \xi (\y) - \xi( \y')|^2  \, d\y \, d\y' + \int_{\R^3} a^{\lambda} (\y) |\xi (\y) |^2 d\y
\end{align}

\n
where 
$a^{\lambda}$ is a spherically symmetric and increasing function such that $ \lim_{\la \to +\infty} a^{\lambda} (0) =+\infty$, see \cite{CDF} for the explicit expression. In \cite{FDT} it is proved that there exists $\lambda_0>0$ such that for any $\lambda >\lambda_0$ the quadratic form $\Phi^{\lambda}, \D(\Phi^{\lambda})$ is closed and 
\begin{equation}\label{coerc}
\Phi^{\lambda} (\xi ) \geq a^{\lambda}(0)\,\| \xi \|^2_{L^2(\R^3) }
\end{equation}
 Let us denote by $\Gamma(-\lambda)$, $\D(\Gamma(-\lambda))$ the unique self-adjoint and positive operator associated to $\Phi^{\lambda}, D(\Phi^{\lambda})$ for  $\lambda >\lambda_0$. Let us also define the \say{potential} produced by the \say{charge distribution} $\xi $
\begin{equation}
\mathcal G(z) \xi (\x,\y) = \int_{\R^3} \!\!  \, G^z (\x,\y;\y',\y') \xi(\y')\, d\y'
\end{equation}
and notice that $\mathcal G(-\lambda) \in \mathcal B(L^2(\R^3), L^2(\R^6))$. Then for any $\alpha \in \R $ we define our model Hamiltonian in $L^2(\R^6)$ as 

\begin{align} \label{doha}
&\D (H) = \big\{ \psi \in L^2 (\R^6)\, | \,\psi =\phi^\la+ \mathcal G (-\lambda)  \xi , \, \phi^\la\in \D(H_0) ,\, \xi \in \D (\Gamma (-\lambda)), \, \nonumber\\
& \;\;\;\;\;\;\;\;\;\;  \;\;\;\;\;\;\;\;\;\;\;\;\;\;\;\;\;\;\;\;\;\;\;\;\;\;\;\;\;\;\;\;\;\;\;\;\;\;\;\; \;\;\;\;\;\;\;\;\;\;\;\;\;\;\;\;\;\;\;\;\;\;\;\;\;\;\;\;\;\;\;\;\;\;\;\;\;\;\;\;\big(\Gamma (-\lambda) +\al \big) \xi = \text{Tr}_\pi \phi^\la  \big\}, \\
&(H+\la) \psi = (H_0 +\la) \phi^\la \label{azha} .\\
\end{align}
The definition is independent of the particular choice of $\lambda $. It turns out that $H, \D(H)$ is self-adjoint and bounded from below. 
Taking $\la>\la_0$,  $\lambda_0$ sufficiently large, we have that  $\, \Gamma (-\lambda)+\al \,$ is invertible 
and $\big(\Gamma (-\lambda)+\al \big)^{-1} \!\in \B (L^2 (\R^3) )$.  
Then we can represent the resolvent 
\beq
R(z)=(H-z)^{-1}\,, \;\;\;\;\;\;\;\;\;\; z \in \C \setminus \sigma(H)
\eeq
 for $z=-\lambda$, $\lambda>\lambda_0$, as 

\begin{equation}\label{tama}
R(-\la)= R_0 (-\la) + \G (-\la) (\Gamma (-\la) +\al )^{-1} \G^\ast ( -\la)\,. 
\end{equation}

\n
In the following $\la$ will be a fixed number such that $\la>\la_0$. Formula \eqref{tama} shows that the free Hamiltonian is recovered in the limit $\alpha \to + \infty$, i.e., when the two-body scattering length $a$ goes to zero. The relation between $a$ and $\al$ will be discussed in Section 5.

\n
One can check that $H, \D (H)$ defined in \eqref{doha}, \eqref{azha} satisfies the conditions C-1, C-2 written in the Introduction. In particular, for $\, |\x-\y| \to 0\,$  one has the asymptotic expansion

\[
 \G (-\la) \xi   (\x,\y)= \frac{\xi(\y) }{|\x -\y|} -  \Ga(-\la)\xi (\y)+\text{o}(1)
\]
and therefore, for any $\psi \in \D(H)$ one finds 
\begin{align}
&\psi(\x,\y)= \phi^{\lambda}(\x,\y) + \G(-\lambda) \xi (\x,\y) =  \f{\xi(\y)}{|\x-\y|}+\text{Tr}_\pi \phi^\la (\y) - \Gamma(-\lambda) \xi (\y ) + \text{o}(1)\,.
\end{align}
for $|\x -\y|\to 0$. Using the last equation   in \eqref{doha}, we find that $\psi$ satisfies the boundary condition \eqref{BC}. In other words, the equation $\big(\Gamma (-\lambda) +\al \big) \xi = \text{Tr}_\pi \phi^\la$ in \eqref{doha} is an equivalent way of writing  the boundary condition \eqref{BC}.  
We also recall that in \cite{CDF} it is proved that the essential spectrum of $H, \D(H)$ is $[0,+\infty)$ and the wave operators exist and are complete. 

\n
Let us now formulate our main results. 
As a first preliminary step, we show that representation \eqref{tama} extends to the complex plane.  

\begin{prop}\label{meroR}
The resolvent $R (z)$ is a meromorphic operator valued function on $\C \setminus \overline{\R^+}$ with values in  $\B(L^2 (\R^6) )$ and 
\beq \label{continuation}
R(z)= R_0 (z)+ \G (z) (\Gamma (z) +\al )^{-1} \G^\ast ( \overline{z} ) \qquad z\in \C\setminus (\R^+ \cup \sigma_p (H))
\eeq

\n
where $\mathcal G(z) \in \B(L^2(\R^3), L^2(\R^6))$,   
\beq \label{holomo}
\Gamma (z) = \Gamma (-\la) - (\la+z) \G^\ast(-\la ) \G(z) \qquad z\in \C \setminus \R^+
\eeq

\n
and $ (\Gamma (z) +\al )^{-1} \in \mathcal B(L^2(\R^3))\,$ for $\,z\in \C\setminus (\R^+ \cup \sigma_p (H))$.
\end{prop}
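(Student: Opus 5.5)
The plan is the standard Krein-type resolvent argument: first show that $\G(z)$ and $\Gamma(z)$ are well defined and analytic, then check \eqref{continuation} by a direct computation, and finally use analytic Fredholm theory for the invertibility and meromorphy.

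\textbf{Step 1: $\G(z)$ and the definition of $\Gamma(z)$.} We start from the first resolvent identity for $H_0$,
\[
R_0(z)=R_0(-\la)+(z+\la)R_0(z)R_0(-\la),
\]
and pass to the kernels evaluated with the second pair of variables on $\pi$. This gives
\[
\G(z)=\G(-\la)+(z+\la)R_0(z)\G(-\la).
\]
Since $\G(-\la)\in\B(L^2(\R^3),L^2(\R^6))$ and $R_0(z)$ is analytic in $\B(L^2(\R^6))$ on $\C\setminus\overline{\R^+}$, we get $\G(z)\in\B(L^2(\R^3),L^2(\R^6))$, analytic in $z$. Its adjoint satisfies $\G^\ast(\bar z)=\G^\ast(-\la)+(z+\la)\G^\ast(-\la)R_0(z)$. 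We then take \eqref{holomo} as the definition of $\Gamma(z)$ on $\D(\Gamma(-\la))$. Because
\[
\G^\ast(-\la)\G(z)=\G^\ast(-\la)\G(-\la)+(z+\la)\G^\ast(-\la)R_0(z)\G(-\la)
\]
is bounded and analytic, $\Gamma(z)$ is a bounded analytic perturbation of $\Gamma(-\la)$, with the same domain. Two identities follow from the resolvent identity:
\begin{itemize}
\item $\Gamma(z)^\ast=\Gamma(\bar z)$;
\item $\Gamma(z)-\Gamma(w)=-(z-w)\,\G^\ast(\bar w)\G(z)$.
\end{itemize}
The second is the usual Krein $Q$-function relation, checked first for $w=-\la$ and then by subtraction.

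\textbf{Step 2: Fredholm structure and meromorphy.} Write
\[
\Gamma(z)+\al=\big(\Gamma(-\la)+\al\big)\Big[I-(\la+z)\big(\Gamma(-\la)+\al\big)^{-1}\G^\ast(-\la)\G(z)\Big].
\]
The key claim is that $K(z):=(\Gamma(-\la)+\al)^{-1}\G^\ast(-\la)\G(z)$ is compact on $L^2(\R^3)$. This is the step I expect to be the main obstacle. Two routes are available:
\begin{itemize}
\item use the gain of regularity: $(\Gamma(-\la)+\al)^{-1}$ maps into $\D(\Phi^\la)=H^{1/2}\cap L^2_{1/2}$, which embeds compactly in $L^2(\R^3)$ by Rellich with weights;
\item or show directly that $\G^\ast(-\la)R_0(z)\G(-\la)$ carries smoothing and decay in $\y$ through the harmonic confinement.
\end{itemize}
Given compactness, the analytic Fredholm theorem applies on the connected set $\C\setminus\overline{\R^+}$. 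Since $I-(\la+z)K(z)=I$ at $z=-\la$, the inverse $(\Gamma(z)+\al)^{-1}$ exists except on a discrete set $S$ and is meromorphic there, with finite-rank residues.

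\textbf{Step 3: the formula and identification of $S$.} Define the right-hand side of \eqref{continuation} for $z\notin S$. Using the identities of Step 1, verify the resolvent identity
\[
\mathcal R(z)-R(-\la)=(z+\la)\,\mathcal R(z)R(-\la).
\]
This is a routine algebraic computation of the type used for Krein's formula. Together with \eqref{tama}, it shows that $\mathcal R(z)$ coincides with $(H-z)^{-1}$ wherever both exist, and analytic continuation gives equality on $\C\setminus(\R^+\cup S)$.

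It remains to show $S\setminus\R^+=\sigma_p(H)\setminus\R^+$.
\begin{itemize}
\item If $\xi\neq0$ lies in $\ker(\Gamma(z)+\al)$, then $\psi=\G(z)\xi$ belongs to $\D(H)$ by \eqref{doha}, with $\phi^\la=\G(z)\xi-\G(-\la)\xi\in\D(H_0)$. It is nonzero, since its singular part is $\xi/|\x-\y|$, and satisfies $H\psi=z\psi$. Self-adjointness then forces $z$ real and negative.
\item Conversely, for $z\notin S$ the right-hand side of \eqref{continuation} is a bounded inverse of $H-z$, so $z\notin\sigma(H)$.
\end{itemize}
Hence the only poles of $R(z)$ in $\C\setminus\overline{\R^+}$ are the discrete negative eigenvalues, which gives meromorphy with values in $\B(L^2(\R^6))$ and $(\Gamma(z)+\al)^{-1}\in\B(L^2(\R^3))$ off $\R^+\cup\sigma_p(H)$.
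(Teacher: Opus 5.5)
Your proposal is correct. Its skeleton is the paper's: define $\Gamma(z)$ by \eqref{holomo}, write $\Gamma(z)+\al$ as an invertible operator times $I$ minus an analytic compact family that vanishes at $z=-\la$, and apply the analytic Fredholm theorem. Several ingredients differ, however.

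\textbf{Compactness.} The paper takes compactness from $\G^\ast(-\la):L^2(\R^6)\to L^2(\R^3)$, quoted from \cite{CDF}. Your first route gets it from compactness of $(\Gamma(-\la)+\al)^{-1}$ instead. This works: on $\D(\Phi^\la)$ the form norm and the $H^{1/2}\cap L^2_{1/2}$ norm are equivalent by the closed graph theorem, and $H^{1/2}\cap L^2_{1/2}\hookrightarrow L^2(\R^3)$ is compact.

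\textbf{Factorization.} The paper uses the symmetric factorization with $(\Gamma(-\la)+\al)^{\pm1/2}$, whereas yours is one-sided. For this proposition that is immaterial. The symmetric $M(z)$ is what the paper exploits later in Proposition \ref{LAPGamma}, when it takes $\mathrm{Im}\langle f,M^+(\mu)f\rangle$ using the self-adjointness of $K^>(\mu)$.

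\textbf{The formula \eqref{continuation}.} The paper argues by the identity theorem. Formula \eqref{tama} holds at every $-\mu$ with $\mu>\la_0$, and \eqref{ufficio} identifies the form-defined $\Gamma(-\mu)$ there with the continuation. You instead verify the resolvent identity against \eqref{tama} at the single point $-\la$. This needs only the relation $\G^\ast(\bar z)\G(-\la)=\G^\ast(-\la)\G(z)$, and the Krein-type algebra does close. It does not need the quadratic-form characterization at other points.

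\textbf{Poles and eigenvalues.} The paper merely asserts that the poles of \eqref{waraku} are eigenvalues of $H$. Your Step 3 actually proves both inclusions, which is more complete.

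One small point in that step: the identity $\mathcal R(z)-R(-\la)=(z+\la)\mathcal R(z)R(-\la)$ only shows that $\mathcal R(z)$ is a left inverse of $H-z$. To conclude that a real $z<0$ outside $S$ is not in $\sigma(H)$, you have two easy options. Use $\mathcal R(z)^\ast=\mathcal R(\bar z)$, which follows from $\Gamma(z)^\ast=\Gamma(\bar z)$, to get the right-inverse property. Alternatively, use $\|R(z)\|=1/\mathrm{dist}(z,\sigma(H))$ for non-real $z$, together with the boundedness of the analytic function $\mathcal R$ near such a point.
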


\vs

\n
Starting from the representation \eqref{continuation}, we prove the LAP for $H, \D(H)$.

\begin{thm} \label{LAPmain}
There exists a discrete set $\NN \subset [0, \infty)$ such that for $\mu \in [0,\infty)\setminus \NN$ and $\mu/\ome  \notin \N_0$ the limits
\beq\label{bvR}
 \lim_{\ve \downarrow 0} R(\mu \pm i \ve) := R^\pm (\mu) = R_0^{\pm}(\mu)+ \mathcal G^{\pm}(\mu) \lf( \Gamma^{\pm}(\mu) + \alpha \ri)^{-1} \mathcal G^{* \mp}(\mu) 
\eeq
exist in $\B (L^2_{s} (\R^3) , L^2_{-s} (\R^3)) \otimes \B (L^2(\R^3))$ for $s>1/2$ and they are continuous in $\mu$ in the uniform topology. The same statement holds for $s>1$ if $\mu/\ome  \in \N_0$. 
Moreover, if we denote by $\E$ the set of  positive eigenvalues of $H$ of finite multiplicity, we have $\NN \cap (\R^+\setminus \N) \subset \E$.
\end{thm}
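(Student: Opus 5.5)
We start from the representation \eqref{continuation} of Proposition \ref{meroR} and treat its three ingredients separately: the free resolvent $R_0(z)$, the potential operators $\G(z)$, $\G^\ast(\bar z)$, and the inverse $(\Gamma(z)+\al)^{-1}$.

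\textbf{Free part.} Expanding in the oscillator basis gives
\[
R_0(z)=\sum_{\nn} r_0(z-\omega|\nn|)\otimes |\phi_\nn\rangle\langle\phi_\nn| .
\]
For the channels with $\omega|\nn|<\mu$, the classical LAP for $-\Delta$ in $\R^3$ applies at the positive energy $\mu-\omega|\nn|$. It gives boundary values $r_0^\pm$ in $\B(L^2_s,L^2_{-s})$ for $s>1/2$, locally Hölder continuous in $\mu$. For the closed channels, $\omega|\nn|>\mu$, the operator $r_0(z-\omega|\nn|)$ is analytic near $\mu$. Its norm in $\B(L^2)$ is bounded by $(\omega|\nn|-\mu)^{-1}$, and this bound is uniform for $|\nn|$ large. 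The sum over closed channels therefore converges in $\B(L^2(\R^6))$, and hence in the weighted space. When $\mu=\omega|\nn|$ for some $\nn$, one channel sits at threshold energy $0$. There $r_0^\pm(0)$, with kernel $1/(4\pi|\x-\x'|)$, is bounded $L^2_s\to L^2_{-s}$ only for $s>1$, and continuity at threshold needs the same weight. This is the origin of the condition $s>1$. The output is $R_0^\pm(\mu)$ in $\B(L^2_s(\R^3),L^2_{-s}(\R^3))\otimes\B(L^2(\R^3))$, continuous in $\mu$.

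\textbf{Potential operators.} By the same channel decomposition,
\[
\G(z)\xi=\sum_\nn \big(r_0(z-\omega|\nn|)\otimes\phi_\nn\big)\big(\phi_\nn\,\xi\big)^{\vee},
\]
which amounts to composing $R_0(z)$ with the adjoint of the trace on $\pi$. We show that $\G(z)$ extends continuously to $z=\mu\pm i0$ as a map from $L^2(\R^3)$ into the weighted target space. The first step is to check that $\mathrm{Tr}_\pi^\ast$ maps $L^2(\R^3)$ into a space on which the boundary values $R_0^\pm$ act. Since $\mathrm{Tr}_\pi^\ast\xi$ is not in $L^2$, we must use the $\x$-decay: in relative coordinates $\x-\y$ we gain Hölder regularity in $\mu$ through the Fourier representation $\int d\k\,(|\k|^2+\omega|\nn|-z)^{-1}$. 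For $\G^\ast$ we dualize.

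For $\Gamma(z)$ we use \eqref{holomo}:
\[
\Gamma(z)=\Gamma(-\la)-(\la+z)\,\G^\ast(-\la)\G(z).
\]
The factor $\G^\ast(-\la)$ supplies the decay needed to make $\G^\ast(-\la)\G^\pm(\mu)$ bounded, and in fact compact relative to $\Gamma(-\la)$, on $L^2(\R^3)$. The function $\mu\mapsto\G^\ast(-\la)\G^\pm(\mu)$ is continuous. Compactness would come from the local decay: the weight is localized in $\y$ by the oscillator, and in the relative variable by the smoothing of $r_0$.

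\textbf{Inverting $\Gamma^\pm(\mu)+\al$.} Write
\[
\Gamma^\pm(\mu)+\al=\big(\Gamma(-\la)+\al\big)\Big[I-(\Gamma(-\la)+\al)^{-1}(\la+\mu)\,\G^\ast(-\la)\G^\pm(\mu)\Big].
\]
The bracket is identity minus a compact operator depending continuously on $\mu$ in norm. Moreover it is analytic off the real axis and invertible for $\mathrm{Im} z\neq0$ by Proposition \ref{meroR}. The analytic Fredholm theorem, in its boundary-value version, then shows that the set $\NN$ where the bracket fails to be invertible is closed and discrete. Off $\NN$ the inverse is continuous, and \eqref{bvR} follows by combining this with the previous two steps.

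\textbf{Main obstacle: $\NN\cap(\R^+\setminus\N)\subset\E$.} Suppose $\mu\in\NN$ is not a threshold and $(\Gamma^+(\mu)+\al)\xi=0$ with $\xi\neq0$. Set $\psi=\G^+(\mu)\xi$. Taking the imaginary part of $\langle\xi,(\Gamma^+(\mu)+\al)\xi\rangle=0$ gives $\mathrm{Im}\langle\xi,\Gamma^+(\mu)\xi\rangle=0$. This imaginary part equals a sum over open channels of $|\widehat{\phi_\nn\xi}|^2$ restricted to the spheres $|\k|^2=\mu-\omega|\nn|$. All these restrictions therefore vanish.

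By an Agmon-type argument (division by the symbol), each open-channel component of $\psi$ then lies in $L^2$. The closed channels decay exponentially. Hence $\psi\in L^2$, and since it satisfies \eqref{BC}, it is an eigenfunction of $H$ with eigenvalue $\mu$. This is exactly where non-thresholds are required: at thresholds the division by $|\k|^2$ fails. The delicate point we expect is making this division work uniformly over the infinitely many closed channels.
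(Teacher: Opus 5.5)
There is a genuine gap: the discreteness of $\NN$ is not proved. Your treatment of $R_0$ and $\G$, and your eigenvalue argument at non-threshold points of $\NN$, follow the paper's route: open/closed channel splitting, Gaussian decay of $\phi_\nn$ to put $\phi_\nn\xi$ in $L^2_s$, vanishing of $\widehat{\phi_\nn\xi}$ on the energy spheres, and Agmon-type division.

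The problem is the step where you invoke an ``analytic Fredholm theorem in its boundary-value version'' to conclude that the set where $I-(\Gamma(-\la)+\al)^{-1}(\la+\mu)K^\pm(\mu)$ fails to be invertible is closed and discrete. Analyticity and invertibility in the open half-plane, plus norm continuity up to the real axis, give only two things: the exceptional set is closed, and the inverse is continuous off it. They do not give discreteness. Already for scalar functions this fails: there are functions in the disc algebra, zero-free in the open disc, whose boundary zero set is a prescribed Cantor set of measure zero. Discreteness therefore has to come from the structure of $H$.

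The paper obtains it only after proving $\NN\cap(\R^+\setminus\N)\subset\E$. Suppose $\mu_n\to\mu$ with normalized $f_n\in\ker(I-M^+(\mu_n))$. Writing $f_n=(M^+(\mu_n)-M^+(\mu))f_n+M^+(\mu)f_n$ and using norm continuity and compactness of $M^+$ gives, along a subsequence, $f_n\to f$ strongly with $\|f\|=1$. The eigenfunctions $\psi_n=\G^+(\mu_n)(\Gamma(-\la)+\al)^{-1/2}f_n$ belong to distinct eigenvalues, so they are orthogonal, bounded and weakly null. This contradicts their convergence, in the weighted space, to $\G^+(\mu)(\Gamma(-\la)+\al)^{-1/2}f\neq0$. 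The paper uses the same compactness argument for the finite multiplicity built into the definition of $\E$, which your sketch does not address either.

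You could also repair your own approach by showing that $M^+$ continues analytically \emph{across} the real axis. The closed-channel part is analytic in a full complex neighbourhood of $\mu$. The open-channel kernels $\phi_\nn(\y')\phi_\nn(\y'')\,e^{i\sqrt{z-\ome|\nn|}\,|\y'-\y''|}/|\y'-\y''|$ remain Hilbert--Schmidt when $\text{Im}\sqrt{z-\ome|\nn|}$ becomes slightly negative, because the Gaussians absorb the exponential growth. Near a threshold $\ome N$ you would use $\sqrt{z-\ome N}$ as the variable. The ordinary analytic Fredholm theorem then applies on a complex neighbourhood and gives discreteness directly. Without one of these two ingredients, the first claim of the theorem is unsupported.

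Two smaller points. First, $\G^\pm(\mu)$ does not map into $L^2(\R^6)$, so the product $\G^\ast(-\la)\G^\pm(\mu)$ needs a justification that ``$\G^\ast(-\la)$ supplies the decay'' does not provide. The paper splits $K=K^<+K^>$:
\begin{itemize}
\item $K^>$ is handled through the truncated Hamiltonian $H_{0,n_0}$, whose resolvent is analytic near $\mu$ with values in $\B(L^2(\R^6),H^2(\R^6))$, so the trace theorem applies;
\item $K^<$ is handled through its explicit kernel, a bounded function of $|\y'-\y''|$ times Gaussians, hence Hilbert--Schmidt, with convergence by dominated convergence.
\end{itemize}
Second, the difficulty you anticipate at the end is misplaced. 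In the closed channels $|\k|^2+\ome|\nn|-\mu\geq\ome(n_0+1)-\mu>0$, so no division is needed there; the Agmon-type division is only required for the finitely many open channels.
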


\vs

\n
The proofs of Proposition \ref{meroR} and Theorem \ref{LAPmain} are given in Section 3, where also the mapping properties of each operator in the r.h.s. of \eqref{bvR} are characterized.  As a corollary we also obtain that $\;\sigma_{ac}(H) = [0,\infty)$, $\;\sigma_{sc}(H)=\emptyset\;$   and the wave operators exist and are complete. 

\n
The second main result is the eigenfunction expansion theorem for $H, \D(H)$ and the corresponding representation for the wave operators $W_{\pm}$. 
More precisely, let us define for $f\in L^2(\R^6)$
\[
 \lf( \F_0 f\ri) (\k, \nn) = \int_{\R^3} \overline{ \Phi_0 (\k, \nn, \x,\y) } f(\x,\y) \, d\x\, d\y    \qquad \qquad  \Phi_0 (\k, \nn, \x,\y) = \frac{1}{(2 \pi)^{3/2} } e^{ i \k\cdot \x} \phi_\nn (\y).
\]
The map $\F_0:  L^2(\R^6) \to  L^2(\R^3)\otimes \ell^2 (\N_0^3)$ is a unitary operator and it diagonalizes $H_0$, that is for $ f\in \D (H_0)$ we have
\[
(\F_0 \, H_0 f) (\k, \nn) = \lf(|\k|^2 + \ome |\nn| \ri) (\F_0 f) (\k, \nn).
\]
We also define the generalized eigenfunctions $\Phi_\pm (\k, \nn)$, $  {|\k|^2}+ \ome |\nn|\notin \NN $ of the Hamiltonian $H$ by
\beq 
\begin{cases} \label{borbotto}
\Phi_\pm (\k, \nn) = \Phi_0 (\k, \nn) + \G^\pm ( |\k|^2 + \ome |\nn| ) \xi_\pm (\k, \nn) \\
( \Ga^\pm ( |\k|^2 + \ome |\nn| ) + \al ) \xi_\pm (\k, \nn)  = \Tr \Phi_0 (\k, \nn)
\end{cases}
\eeq

\n
and  the generalized Fourier transform, for $f\in \S (\R^6) $
\beq
\F_\pm f (\k, \nn) = \int_{\R^6} \overline{ \Phi_\pm (\k, \nn, \x,\y) } f(\x,\y) \, d\x\, d\y.
\eeq

\n
Then we have

\begin{thm}\label{eigexp}
The generalized Fourier transform  extends  to a bounded map $\F_\pm : L^2(\R^6) \mapsto L^2(\R^3)\otimes \N_0^3$. 
Moreover 
\beq \label{call1}
W_\pm = \F_\pm^\ast \F_0,
\eeq
and for any Borel function $f:\R \to \R$
\beq \label{call2}
\F_\pm^\ast  \F_\pm = P_{ac}(H) \qquad  \F_\pm  \F_\pm^\ast=I \qquad f(H)\, P_{\text{ac}} (H) =\F_\pm^\ast M_{ f} \F_\pm,
\eeq
where $M_{f}$ is the multiplication operator by $f({ |\k|^2} + \ome |\nn |)$ .
\end{thm}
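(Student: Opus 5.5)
The plan is the standard Agmon--Kuroda route to eigenfunction expansions, using the LAP of Theorem \ref{LAPmain} as the analytic input. Everything rests on the spectral projection formula
\[
\frac{1}{2\pi i}\big(R^+(\mu)-R^-(\mu)\big),
\]
which gives the density of $dE_H(\mu)$ on $[0,\infty)\setminus(\NN\cup\omega\N_0)$. Since $\NN$ and $\omega\N_0$ are discrete, they have Lebesgue measure zero. Theorem \ref{LAPmain} also gives $\sigma_{sc}(H)=\emptyset$, so $P_{ac}(H)$ is the spectral projection of $H$ onto $[0,\infty)$ minus the eigenprojections.

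\textbf{Step 1: expand $R^\pm$ in the free channels.} For $f,g\in\S(\R^6)$ and $\mu$ as above, I will write $\langle f,\tfrac{1}{2\pi i}(R^+(\mu)-R^-(\mu))g\rangle$ as a sum over $\nn$ with $\omega|\nn|<\mu$ of integrals over the sphere $|\k|^2=\mu-\omega|\nn|$.

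The free part is handled by Stone's formula for $H_0$ in the $\F_0$ representation. It yields $\sum_{\nn}\int_{|\k|^2=\mu-\omega|\nn|}\overline{(\F_0 f)}(\F_0 g)\,\frac{dS}{2|\k|}$.

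For the interaction part, I use the resolvent identity between $\Gamma^+$ and $\Gamma^-$. The identity \eqref{holomo} implies
\[
\Gamma^+(\mu)-\Gamma^-(\mu)=-\,\mathcal G^{*}\big(R_0^+-R_0^-\big)\mathcal G ,
\]
understood appropriately. Combining this with $(\Gamma^\pm+\alpha)\xi_\pm=\Tr\Phi_0$, the second-resolvent algebra
\[
(\Gamma^++\alpha)^{-1}-(\Gamma^-+\alpha)^{-1}=(\Gamma^++\alpha)^{-1}\big(\Gamma^--\Gamma^+\big)(\Gamma^-+\alpha)^{-1}
\]
turns the whole expression into
\[
\sum_{\nn}\int_{|\k|^2=\mu-\omega|\nn|}\overline{\langle \Phi_\pm(\k,\nn),f\rangle}\,\langle\Phi_\pm(\k,\nn),g\rangle\,\frac{dS}{2|\k|}.
\]
This requires identifying $\G^\mp(\mu)^*$ applied to the boundary values of the free kernel with $\Phi_0$ traced on $\pi$, i.e.\ $\Tr\Phi_0(\k,\nn)=(2\pi)^{-3/2}e^{i\k\cdot\y}\phi_\nn(\y)$. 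Here $\xi_\pm(\k,\nn)$ is well defined and continuous in $\k$ away from the exceptional set, because $(\Gamma^\pm+\alpha)^{-1}$ is continuous there by Theorem \ref{LAPmain}.

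\textbf{Step 2: integrate in $\mu$.} Integrating against $f$ Borel in $\mu$ gives
\[
\langle f, \varphi(H)P_{ac}(H)g\rangle=\langle \F_\pm f, M_\varphi\F_\pm g\rangle
\]
on Schwartz functions. Taking $\varphi=1$ yields $\|\F_\pm f\|=\|P_{ac}f\|\le\|f\|$. Hence $\F_\pm$ extends boundedly, and we obtain $\F_\pm^*\F_\pm=P_{ac}(H)$ together with the functional-calculus formula in \eqref{call2}. Polarization then gives the intertwining relation $\F_\pm\varphi(H)=M_\varphi\F_\pm$.

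\textbf{Step 3: wave operators and surjectivity.} For $W_\pm=\F_\pm^*\F_0$, I verify $\F_\pm W_\pm=\F_0$ using the stationary (Abelian) representation
\[
W_\pm=\text{s-}\lim_{\ve\downarrow0}\ \ve\int_0^\infty e^{-\ve t}e^{\pm iHt}e^{\mp iH_0t}\,dt .
\]
This becomes
\[
\langle W_\pm f,g\rangle=\lim_{\ve\downarrow0}\frac{\ve}{\pi}\int\big\langle R_0(\mu\pm i\ve)f,\;R(\mu\pm i\ve)g\big\rangle\,d\mu .
\]
Inserting \eqref{continuation} and evaluating the $\mu$-integral in the $\F_0$ representation shows that the limit equals $\langle \F_0f,\F_\pm g\rangle$. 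The delta-concentration as $\ve\to0$ gives $\Phi_0+\G^\pm\xi_\pm=\Phi_\pm$ at energy $|\k|^2+\omega|\nn|$; the sign convention must be checked carefully. From existence and completeness of the wave operators (\cite{CDF}, or the corollary of Theorem \ref{LAPmain}), $W_\pm$ is an isometry onto $P_{ac}L^2$. Since $W_\pm^*W_\pm=I$ with $W_\pm=\F_\pm^*\F_0$, and $\F_0$ is unitary, we get $\F_\pm\F_\pm^*=I$.

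\textbf{Main obstacle.} The hardest step is the Step 1 identity, together with the uniform control needed to interchange the limit $\ve\to0$ with the infinite sum over $\nn$ and with the $\mu$-integration. The channel sum is infinite for $\G$ but only finitely many channels are open, so the closed channels must be shown to contribute nothing to $R^+-R^-$. I would try to establish this using the weighted bounds $\G^\pm(\mu)\in\B(L^2,L^2_{-s}\otimes L^2)$ and continuity of $\mu\mapsto(\Gamma^\pm(\mu)+\alpha)^{-1}$ on compacts avoiding the exceptional set. The exceptional set itself is removed by a density argument, since it has measure zero.
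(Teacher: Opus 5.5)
Your proposal is correct. Its skeleton is the paper's: a spectral-density identity \eqref{volta} in terms of $\Phi_\pm$, Stone's formula to get boundedness of $\F_\pm$, the stationary (Abelian) representation of $W_\pm$, and finally the isometry of $W_\pm$.

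The genuine difference is how you obtain \eqref{volta}. The paper never manipulates $\Gamma^\pm(\mu)$. It uses the factorization $R(z)=Q(z)R_0(z)$ and the first resolvent identity to write $R(z)-R(\bar z)=Q(z)\bigl(R_0(z)-R_0(\bar z)\bigr)Q(z)^\ast$ with $z=\mu+i\ve$, and then lets $\ve\downarrow 0$. The only delicate point there is convergence of the traces, handled by the low/high splitting: the closed-channel piece $2i\ve R_0^>(\mu+i\ve)R_0^>(\mu-i\ve)$ is $O(\ve)$ in $\B(H^{-2}(\R^6),H^2(\R^6))$.

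You instead work with the boundary values already built in Section 3. You verify algebraically that the Krein formula equals $Q^\pm(\mu)\, D\, Q^\pm(\mu)^\ast$, where $D=R_0^+(\mu)-R_0^-(\mu)$, using the second resolvent identity for $(\Gamma^\pm(\mu)+\al)^{-1}$ and the relation $\Gamma^+(\mu)-\Gamma^-(\mu)=-\Tr D\,\delta_\pi$. That, rather than $-\G^\ast D\G$ literally, is the right form. It follows from \eqref{scrivomale}, since $K^{>,+}(\mu)=K^{>,-}(\mu)$ and $R_0(-\la)$ acts as $(\la+\mu)^{-1}$ on the range of $D$.

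Your route is legitimate. The operators $\Gamma^\pm(\mu)$ share the domain $\D(\Gamma(-\la))$ because $K^\pm(\mu)$ is bounded, $\Gamma^+(\mu)^\ast=\Gamma^-(\mu)$, and the inverses in \eqref{trenitalia} map onto $\D(\Gamma(-\la))$. It needs no new $\ve$-limits, while the paper's route has the advantage that the same factorization $R=QR_0$ is reused verbatim for $W_\pm$.

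You also read off $\F_\pm^\ast M_\varphi\F_\pm=\varphi(H)P_{ac}(H)$ directly from the spectral density, and $\F_\pm\F_\pm^\ast=I$ from $W_\pm^\ast W_\pm=I$ alone. So completeness, in the form $W_\pm W_\pm^\ast=\F_\pm^\ast\F_\pm=P_{ac}(H)$, comes out as a byproduct, whereas the paper imports it to obtain \eqref{call2}.

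Two remarks. First, the obstacle you single out is not real in your route. The closed channels cancel \emph{exactly}, because $R_0^>(z)=(H_{0,n_0}-z)^{-1}$ is analytic near $\mu$, so $R_0^{>,+}(\mu)=R_0^{>,-}(\mu)$, and likewise for $\G^>$ and $K^>$. Weighted bounds and continuity of $(\Gamma^\pm+\al)^{-1}$ would not give this by themselves.

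Second, on the sign you flag. With $z=\mu+i\ve$ one has $\langle R_0(z)f,R(z)g\rangle=\langle R_0(\bar z)R_0(z)f,\,Q(\bar z)^\ast g\rangle$, so the limit brings in $Q^-(\mu)$ and hence $\Phi_-$. With your convention $W_\pm=\text{s-}\lim_{t\to\pm\infty}e^{iHt}e^{-iH_0t}$, the computation therefore gives $W_\pm=\F_\mp^\ast\F_0$. Formula \eqref{call1} as stated holds for Reed--Simon's labelling $W_\pm=\text{s-}\lim_{t\to\mp\infty}e^{iHt}e^{-iH_0t}$. The paper's proof passes over this point when it identifies the limit of $Q^\ast(\mu-i\ve)$ with $Q^{+,\ast}(\mu)$.
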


\vs

\n
The proof of Theorem \ref{eigexp} is given in Section 4. 

\n
As a final result, in Section 5  we show that Fermi's formula \eqref{fermi} for the cross-section can be recovered from the scattering matrix $S=W_{+}^* W_{-}$ in the Born approximation, i.e., for $\alpha$ large.


\vs

\section{Limiting Absorption Principle for $H$}
\n
In this section we prove the representation \eqref{continuation} for the resolvent $R(z)$ and the limiting absorption principle for $H$.  
We first prove Proposition \ref{meroR}. 

\begin{proof}

Let us observe that $R_0 (z)$ is an analytical operator valued function on $\C \setminus \overline{\R^+}$ with values in $\B(L^2(\R^6), \D(H_0))$. 
Due to \eqref{pasta1} and  \eqref{pasta2}  we have $\D (H_0)\subset  H^2(\R^6)$.
Hence $R_0 (z)$ is an analytical function on $\C \setminus \overline{\R^+}$ with values in  $\B(L^2 (\R^6) , H^2 (\R^6))$.
In particular, this implies that  $\G^\ast (z)=\Tr R_0(z)$ is analytical
on $\C \setminus \overline{\R^+}$ with values is in $\B(L^2(\R^6), L^2(\R^3))$ by trace theorems. By duality $\G (z) $ takes values in $\B(L^2(\R^3), L^2(\R^6))$. 

\n
Let us discuss the analytical continuation $\Ga(z)$ of $\Ga(-\la)$. 
We notice that the resolvent identity implies that for $\la,\mu>0$ 
\[
 \G (-\la) \xi = \G (-\mu) \xi +(\mu-\la)  R_0 (-\la)  \G (-\mu) \xi .
\]
Since $ \G (-\la) \xi \in L^2(\R^3) $ for $\xi \in L^2 (\R^3)$, the second term on the r.h.s. belongs to $H^2(\R^6)$  and therefore can be traced on $\pi$. 
This implies the following identity
\beq \label{ufficio}
\Gamma (-\mu) =\Gamma(-\la) - (\la-\mu) \G^\ast(-\la) \G(-\mu).
\eeq
Motivated by \eqref{ufficio}, 
we define the analytical continuation $\Ga(z)$ as in \eqref{holomo}. Moreover, for notational convenience we set 
\beq \label{Kappa}
K(z) := \G^\ast(-\la ) \G(z) \qquad z\in \C \setminus \R^+.
\eeq
We observe that $K(z)$ is an analytic operator valued function with values in $\B_\infty(L^2(\R^3))$ due to the above analiticity properties of $\G (z)$ and that $ \G^\ast(-\la )$ is a compact operator from $L^2(\R^6)$ to  $L^2(\R^3)$ (see \cite{CDF}). 
Using \eqref{holomo}, we can  cast $\Gamma (z) +\al$ in the following  form
\begin{align*}
\Gamma (z) +\al &= \lf(\Gamma(-\la) + \al \ri)^{1/2} \lf[ I - (\la+z)  (\Gamma(-\la) + \al )^{-1/2} K(z)  (\Gamma(-\la) + \al )^{-1/2} \ri] \lf( \Gamma(-\la) + \al \ri)^{1/2} \\
& =  \lf(\Gamma(-\la) + \al \ri)^{1/2} \lf(I -M(z)\ri) \lf( \Gamma(-\la) + \al \ri)^{1/2}
\end{align*}
where we have  introduced 
\[
M(z) :=  (\la+z)  (\Gamma(-\la) + \al )^{-1/2} K(z)  (\Gamma(-\la) + \al )^{-1/2} .
\]
We know that $(\Gamma(-\la) + \al )^{-1/2} \in \B(L^2(\R^3) )$ for $\la >\la_0$. 
Therefore $M(z)$ for $z\in \C \setminus \R^+$ is an analytic function on $\C \setminus \R^+$ with values in $\B_\infty(L^2(\R^3))$
such that $M(-\la)=0$. Hence, by analytical Fredholm theorem, $(I -M(z))^{-1}$ is meromorphic on $\C \setminus \R^+$ with values in  $\B (L^2(\R^3))$ and the same statement holds true for 
\beq \label{waraku}
(\Gamma(z) + \al )^{-1} =  (\Gamma(-\la) + \al )^{-1/2}(I -M (z))^{-1} (\Gamma(-\la) + \al )^{-1/2}.
\eeq
We notice that the poles of \eqref{waraku} correspond to the eigenvalues of the self-adjoint operator $H$.
The representation \eqref{continuation} follows from \eqref{tama} and the analyticity properties discussed above and therefore the proof of Proposition \ref{meroR} is complete.
\end{proof}

\vs

\n
In the rest of this Section we discuss the more delicate problem of the boundary limits of the resolvent $R(z)$ on the real line. For the convenience of the reader we  recall some well known results on the LAP for $h_0$  (see e.g. \cite{A,Y}).
\begin{prop} \label{LAP0}
Let $r_0(z)=(h_0-z)^{-1}$. We consider $r_0(z)$ as an operator valued analytical function on $\C \setminus \overline{\R^+}$ with values in $\B(L^2_s (\R^3), H^2_{-s} (\R^3))$ for $s>1/2$. Moreover if $\mu>0$ then the following limits exists in the  uniform 
topology of $\B(L^2_s (\R^3), H^2_{-s}(\R^3)) $ 
\[
\lim_{ \substack{ z\to \mu \\ \pm\text{Im}\,z>0}} r_0 (z) =r_0^\pm (\mu),
\]
and the limits are continuous functions of $\mu$. The same statement holds on  $\C \setminus {\R^+}$ for $\mu=0$ if $s>1$. The following identity holds for $f\in L^2_s (\R^3) $, $s>1/2$, $\mu>0$
\beq \label{agmontrace}
\text{Im} \langle r_0^\pm (\mu) f,f \rangle= \pm \f{\pi}{2\sqrt{\mu} }\int_{|\k |=\sqrt{\mu} } |\hat f (\k)|^2 d\sigma.
\eeq
\end{prop}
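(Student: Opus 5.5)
Since the statement concerns the free Laplacian $h_0=-\Delta$ in $\R^3$, we work in Fourier variables and use the explicit kernel of $r_0(z)$; this is the classical Agmon argument (see \cite{A,Y}).

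\emph{Analyticity.} For $z\in\C\setminus\overline{\R^+}$ the operator $r_0(z)$ is bounded from $L^2$ to $H^2$. Since $z$ is a positive distance from the spectrum, multiplication by $\langle x\rangle^{\pm s}$ can be commuted through $r_0(z)$ up to bounded errors. Hence $r_0(z)\in\B(L^2_s,H^2_{-s})$, and the map is analytic in $z$.

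\emph{Boundary values for $\mu>0$.} Write
\[
\langle r_0(z)f,g\rangle=\int_0^\infty \frac{F(t)}{t-z}\,dt, \qquad F(t)=\frac{1}{2}\sqrt t\int_{S^2}\hat f(\sqrt t\,\omega)\overline{\hat g(\sqrt t\,\omega)}\,d\omega .
\]
The key input is the trace lemma: for $s>1/2$ the restriction $f\mapsto \hat f|_{|\k|=\rho}$ is bounded from $L^2_s(\R^3)$ to $L^2(S^2_\rho)$, and it is H\"older continuous in $\rho$ (locally uniformly for $\rho>0$) with exponent $\gamma<\min(s-1/2,1)$. This is the Sobolev trace theorem applied to $\hat f\in H^s(\R^3)$.

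It follows that $F$ is locally H\"older continuous on $(0,\infty)$ as a bilinear form, with norm controlled by $\|f\|_{L^2_s}\|g\|_{L^2_s}$. The Privalov--Plemelj theorem then gives uniform convergence of the Cauchy integral as $z\to\mu$ with $\pm\text{Im}\,z>0$, and the limit is
\[
r_0^\pm(\mu)=\mathrm{p.v.}\!\int\frac{F(t)}{t-\mu}\,dt\ \pm\ i\pi F(\mu).
\]
The limit is continuous in $\mu$. Taking $g=f$, the imaginary part is $\pm\pi F(\mu)$, which is exactly \eqref{agmontrace} since $F(\mu)=\frac{1}{2\sqrt\mu}\int_{|\k|=\sqrt\mu}|\hat f|^2d\sigma$.

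\emph{Upgrade to $H^2_{-s}$.} Use $\Delta r_0(z)=-I-z\,r_0(z)$. The $L^2_{-s}$ convergence of $r_0(z)f$ then transfers to $\Delta r_0(z)f$, since $I$ maps $L^2_s$ into $L^2_{-s}$ trivially.

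\emph{The threshold $\mu=0$.} Here we use the kernel $e^{i\sqrt z|x-y|}/(4\pi|x-y|)$, which converges pointwise to $1/(4\pi|x-y|)$. The limit operator is bounded $L^2_s\to L^2_{-s}$ for $s>1$: the Hardy--Littlewood--Sobolev inequality gives $|x|^{-1}*\cdot\,:L^{6/5}\to L^6$, and weighted H\"older handles the rest; alternatively the operator is Hilbert--Schmidt with kernel $\langle x\rangle^{-s}|x-y|^{-1}\langle y\rangle^{-s}$, which lies in $L^2$ for $s>1$. Uniform convergence follows by dominated convergence applied to the Hilbert--Schmidt norm of the difference, because $|e^{i\sqrt z r}-1|\le\min(2,|\sqrt z| r)$.

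\emph{Main obstacle.} The main obstacle is obtaining uniformity near $\mu$, that is, H\"older continuity of the trace in the radius uniformly in $f$. The plan is to handle this via the $H^s$ trace estimate on spheres using polar coordinates and interpolation, and to avoid case analysis in $s$ by restricting to $1/2<s<3/2$.
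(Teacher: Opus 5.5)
Your proposal is correct. It is essentially the classical Agmon argument (restriction of $\hat f$ to spheres, H\"older continuity of that trace in the radius, then Privalov--Plemelj, with the threshold $\mu=0$ handled via the Hilbert--Schmidt bound for $s>1$); the paper gives no proof of its own here and simply invokes this standard result by citing \cite{A,Y}.

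One simplification: the analyticity in $\B(L^2_s,H^2_{-s})$ needs no commutator argument. It follows directly from $r_0(z)\in\B(L^2,H^2)$ together with the continuous inclusions $L^2_s\hookrightarrow L^2$ and $H^2\hookrightarrow H^2_{-s}$ for $s\geq 0$.
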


\vs

\n
The analysis is divided in several steps. We first discuss   the boundary limits of $R_0(z)$ and   $\G(z)$. Then we study the boundary limits of $(\Gamma (z) +\al )^{-1}$ and we show the relation between  their singularities and the positive eigenvalues of the Hamiltonian. 

\n
The main idea for the whole analysis is the decomposition of the relevant operators into a high energy part and a low energy part.
Using a representation by series, a crucial simplification is that the low energy part is always made by a finite number of terms. On the other hand the high energy part is related to the resolvent of a truncated unperturbed Hamiltonian. Taking the boundary limits of the high part is trivial
since we are considering  points in the resolvent set of the truncated Hamiltonian.

\n
In the following proposition we prove the LAP for the free resolvent $R_0(z)$.


\begin{prop} \label{LAPR0}
For $\mu>0$, $\mu \notin \N_0$, the following limits exists in the  uniform  topology of  $\B(L^2_s (\R^3)\otimes L^2 (\R^3) , H^2_{-s} (\R^3)\otimes H^2 (\R^3))$,  for $s>1/2$,  
\[
 \lim_{\ve \downarrow 0} R_0 (\mu \pm i \ve ) =R_0^\pm (\mu),
\]
and the limits are continuous functions of $\mu$ in the uniform topology. The same statement holds for $s>1$ if $\mu/\ome\in \N_0$.
\end{prop}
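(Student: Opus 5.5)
We will use the spectral decomposition of $H_0$ with respect to the oscillator variable. Write $P_{\nn}$ for the orthogonal projection in $L^2(\R^3_{\y})$ onto $\phi_{\nn}$. Since $H_0 = h_0\otimes I + I\otimes h_\omega$, we have
\[
R_0(z)=\sum_{\nn\in\N_0^3} r_0(z-\omega|\nn|)\otimes P_{\nn}.
\]
Fix $\mu>0$ and choose an integer $N$ with $\omega N>\mu+1$. Then $R_0(z)$ splits into a low energy part, $L(z)=\sum_{|\nn|<N} r_0(z-\omega|\nn|)\otimes P_{\nn}$, which has finitely many terms, and a high energy part, $T_N(z)=\sum_{|\nn|\geq N} r_0(z-\omega|\nn|)\otimes P_{\nn}$.

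\textbf{High energy part.} $T_N(z)$ is the resolvent of the truncated operator $H_0 Q_N$ on $\mathrm{Ran}\,Q_N$, where $Q_N=I\otimes\sum_{|\nn|\geq N}P_{\nn}$. Its spectrum is contained in $[\omega N,\infty)$. So for $z$ in a complex neighbourhood $U$ of $[\mu-\tfrac12,\mu+\tfrac12]$ with $\mathrm{Re}\, z<\omega N-\tfrac12$, the function $T_N(z)$ is analytic in $\B(L^2(\R^6),\D(H_0))$. Since $\D(H_0)\subset H^2(\R^6)$ and its graph norm controls $h_0\otimes I$ and $I\otimes h_\omega$, we get $T_N(z)$ analytic in $\B(L^2(\R^6),H^2(\R^3)\otimes H^2(\R^3))$. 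In particular $T_N$ is norm continuous up to the real axis, with no limit to take. The weights are harmless here: $L^2_s\otimes L^2\hookrightarrow L^2$ and $H^2\otimes H^2\hookrightarrow H^2_{-s}\otimes H^2$ continuously.

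\textbf{Low energy part.} For each of the finitely many $\nn$ with $|\nn|<N$ we apply Proposition \ref{LAP0} to $r_0(z-\omega|\nn|)$ at the point $\mu-\omega|\nn|$, which splits into three cases.
\begin{itemize}
\item If $\mu-\omega|\nn|>0$, the boundary value exists in $\B(L^2_s,H^2_{-s})$ for $s>1/2$ and is continuous.
\item If $\mu-\omega|\nn|<0$, the point lies in the resolvent set of $h_0$. Then $r_0$ is analytic near it with values in $\B(L^2,H^2)\subset\B(L^2_s,H^2_{-s})$.
\item If $\mu-\omega|\nn|=0$, which happens only when $\mu/\omega\in\N_0$, we use the threshold statement of Proposition \ref{LAP0}, valid for $s>1$.
\end{itemize}
Since $\phi_{\nn}\in\S(\R^3)$, the projection $P_{\nn}$ is bounded from $L^2$ into $H^2(\R^3)$, rank one, and norm one. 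Therefore each term $r_0(\cdot)\otimes P_{\nn}$ has the required limit in $\B(L^2_s\otimes L^2,H^2_{-s}\otimes H^2)$, with norm bounded by that of the scalar factor. Summing the finitely many terms and adding $T_N$ gives the existence of $R_0^\pm(\mu)$.

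\textbf{Continuity and main difficulty.} For continuity in $\mu$ we use the same $N$ for all $\mu'$ in a small neighbourhood of $\mu$. Away from thresholds, each low energy factor is continuous by Proposition \ref{LAP0}, and $T_N$ is continuous. The main obstacle is continuity across thresholds $\mu'=\omega|\nn|$, where the relevant factor switches between the first and third cases above. There we need that $r_0(\cdot)$, extended by $r_0^\pm$ on $\R^+$, is norm continuous on $\overline{\C^\pm}$ near $0$ for $s>1$. This is exactly the $\mu=0$ statement of Proposition \ref{LAP0}, which we interpret as uniform continuity on $\C\setminus\R^+$ near $0$ including the boundary values.

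Finally, the uniform convergence $\ve\downarrow0$ is stated as a limit from the half-planes. The choice of $N$ depends only on an upper bound for $\mu$, so the argument is local and covers all $\mu\geq0$.
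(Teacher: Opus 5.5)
Your proposal is correct and follows the paper's proof. It uses the same splitting $R_0(z)=\sum_{\nn} r_0(z-\omega|\nn|)\otimes P_{\nn}$ into finitely many low-energy channels, handled by Proposition \ref{LAP0}, plus a high-energy remainder that is the resolvent of a truncated operator with $\mu$ in its resolvent set. Your variant is slightly cleaner than the paper's $\mu$-dependent $n_0=\lfloor\mu/\omega\rfloor$: you hold a cutoff $N$ with $\omega N>\mu+1$ fixed near $\mu$ and restrict the truncated operator to $\mathrm{Ran}\,Q_N$, which makes continuity across thresholds explicit. The only slip is that $P_{\nn}$ has norm $\|\phi_{\nn}\|_{H^2}$, not $1$, as a map $L^2\to H^2$, and this is harmless.
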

\begin{proof}
 As discussed in Proposition \ref{meroR}, the unperturbed resolvent $R_0 (z)$ is an analytical operator valued function on $\C \setminus \overline{\R^+}$ with values in  $\B(L^2 (\R^6) , H^2 (\R^6))$  and
a fortiori with values  in  $\B(L^2_s (\R^3)\otimes L^2 (\R^3) , H^2_{-s} (\R^3)\otimes H^2 (\R^3))$, $s>1/2$. 
We start from the following representation
\[
R_0(z)=  \sum_{\nn \in \N_0^3 } r_0 (z-\ome |\nn|) \otimes P_{\nn}
\]
where $P_{\nn}$ is the orthogonal projector on $\phi_{\nn}$. 
Let us fix $\mu>0$ and define $n_0 = \lfloor \mu/\omega \rfloor$. Then we decompose $R_0 (\mu +i \ve )$ into a low-energy term $ R_0^<(\mu+i\ve)$ and a high-energy term $ R_0^> (\mu+i\ve)$ defined as follows
\begin{align}
R_0 (\mu+i\ve ) & = \sum_{ |\nn| \leq n_0} r_0(\mu+i\ve -\ome |\nn|) \otimes P_\nn +  \sum_{ |\nn| \geq n_0+1} r_0(\mu+i\ve -\ome |\nn|) \otimes P_\nn \\
&=:  R_0^<(\mu+i\ve ) +  R_0^> (\mu+i\ve ).
\end{align}
Notice that $ R_0^<(\mu+i\ve )$ contains a finite number of terms and that $P_{\nn}$ maps $L^2 (\R^3)$ into $ \S (\R^3)$ and a fortiori into $H^2 (\R^3)$. 
Then, by Proposition \ref{LAP0}, the following limit exists in $\B(L^2_s (\R^3)\otimes L^2 (\R^3) , H^2_{-s} (\R^3)\otimes H^2 (\R^3))$ for $s>1/2$ if $\mu/\ome\notin \N_0$:
\[
 \lim_{\ve \downarrow 0}  R_0^<(\mu+i\ve ) =  \sum_{ |\nn| \leq n_0} r_0^+ (\mu -\ome |\nn|) \otimes P_{\nn}=:  R_0^{<,+}(\mu) ,
\]
and $R_0^{<,+}(\mu)$ is continuous in $\mu$ in such space. If $\mu \in \N_0$, then we require $s>1$ since the terms in $R_0^<(\mu )$ with $\ome |\nn|=\mu$ contains $r_0(0)$.

\n
For the high-energy part, we use a different argument.
Let $\chi_{n_0}$ be the characteristic function of $[ n_0+1, \infty)$ and $P^>_{n_0} = \chi_{n_0}(h_\ome)$. We introduce a truncated Hamiltonian $h_{\ome, n_0}$ defined by
\[
h_{\ome, n_0} :=  P^>_{n_0} h_{\ome} P^>_{n_0}= \sum_{ |\nn|\geq n_0+1}\ome |\nn | P_{\nn} = h_\ome -  \sum_{ |\nn|\leq n_0} \ome |\nn | P_{\nn},
\]
and we denote
 $$H_{0, n_0} =h_0 + h_{\ome, n_0} \,.$$ 
The resolvent $ ( H_{0, n_0} - z)^{-1}$ is analytical in $\C\setminus [n_0+1, \infty) $, since 
the spectrum of $ H_{0, n_0}$ is included in $ [n_0+1, \infty)$. Moreover $\D(h_{\ome, n_0})= \D(h_{\ome})$, since the two Hamiltonians differ by an $L^2$-bounded term. 
Therefore $\D ( H_{0, n_0} ) \subset  H^2(\R^6)$. Notice that the convergence of  $ R_0^> (\mu+i\ve )= (  H_{0, n_0} - (\mu+i\ve) )^{-1}$ to $R_0^> (\mu)$ is trivial. Indeed, $\mu$ belongs to the resolvent set of $ H_{0, n_0}$ and then $R_0^> (\mu)$ is analytical in a neighborhood of $\mu$.

\n
The same arguments can be repeated for $R_0 (\mu-i\ve ) $ with trivial modifications. We also notice that the limit for the high-energy parts is the same,
that is it does not depend on $\pm$. 
\end{proof}

\vs

\n
The next step is the characterization of the boundary limits of $\G(z)$. 

\begin{prop}\label{LAPpot}

For $\mu>0$, $\mu / \ome \notin \N_0$, the following limits exists in the  uniform  topology of $\B(L^2(\R^3) , L^2_{-s} (\R^3)\otimes L^2 (\R^3))$  
\[
 \lim_{\ve \downarrow 0} \G (\mu \pm i \ve ) =\G^\pm (\mu),
\]
and the limits are continuous functions of $\mu$. The same statement holds for $s>1$ if $\mu/ \ome \in \N_0$.
\end{prop}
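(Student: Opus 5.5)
We will follow the same low/high energy splitting used in the proof of Proposition \ref{LAPR0}. Since $\G(z)$ has kernel $G^z(\x,\y;\y',\y')$, we write, for $\xi\in L^2(\R^3)$,
\[
\G(z)\xi(\x,\y)=\sum_{\nn\in\N_0^3}\phi_\nn(\y)\int_{\R^3} r_0(z-\ome|\nn|)(\x,\y')\,\phi_\nn(\y')\xi(\y')\,d\y' .
\]
Fix $\mu>0$ and set $n_0=\lfloor \mu/\ome\rfloor$. We split $\G(\mu\pm i\ve)=\G^<(\mu\pm i\ve)+\G^>(\mu\pm i\ve)$ according to $|\nn|\le n_0$ or $|\nn|\ge n_0+1$.

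\textbf{The low energy part.} For each fixed $\nn$, the map $\xi\mapsto f_\nn:=\phi_\nn\xi$ is bounded from $L^2(\R^3)$ into $L^2_s(\R^3)$ for every $s$, because $\phi_\nn$ is a Schwartz function (Gaussian times a polynomial). Hence
\[
\G^<(z)\xi=\sum_{|\nn|\le n_0}\big(r_0(z-\ome|\nn|)f_\nn\big)\otimes\phi_\nn .
\]
By Proposition \ref{LAP0}, each term converges in $\B(L^2(\R^3),L^2_{-s}(\R^3)\otimes L^2(\R^3))$ to $r_0^\pm(\mu-\ome|\nn|)(\phi_\nn\,\cdot)\otimes\phi_\nn$, and the limit is continuous in $\mu$. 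This holds for $s>1/2$ when $\mu-\ome|\nn|>0$. When $\mu/\ome\in\N_0$, the term with $\ome|\nn|=\mu$ involves $r_0$ at $0$, and there we need $s>1$. The sum is finite, so nothing more is required.

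\textbf{The high energy part.} We want to realize $\G^>(z)$ through the truncated Hamiltonian $H_{0,n_0}$. Write $\Pi:=I\otimes P^>_{n_0}$. Then $\G^>(z)=\Pi\,\G(z)$, and on the range of $\Pi$ the operator $R_0(z)$ agrees with $(H_{0,n_0}-z)^{-1}$. Using the resolvent identity we write
\[
\G^>(z)=\Pi\big[I+(z+\la)(H_{0,n_0}-z)^{-1}\Pi\big]\G(-\la),
\]
which is checked on each component $\nn$ with $|\nn|\ge n_0+1$ via $r_0(z-\ome|\nn|)=r_0(-\la-\ome|\nn|)+(z+\la)r_0(z-\ome|\nn|)r_0(-\la-\ome|\nn|)$. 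Here $\G(-\la)\in\B(L^2(\R^3),L^2(\R^6))$. Moreover $(H_{0,n_0}-z)^{-1}\Pi$ is analytic in $z$ near $\mu$ with values in $\B(L^2(\R^6))$, since $\mu\notin\sigma(H_{0,n_0}|_{\operatorname{Ran}\Pi})\subset[\ome(n_0+1),\infty)$. It follows that $\G^>(z)$ is analytic near $\mu$ in $\B(L^2(\R^3),L^2(\R^6))$, and a fortiori in the weaker weighted space. Its boundary limit exists, is independent of $\pm$, and is continuous.

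\textbf{Main obstacle.} The delicate step is justifying that the sum $\sum_{|\nn|\ge n_0+1}$ converges in operator norm. Termwise each $r_0(z-\ome|\nn|)(\phi_\nn\xi)$ lies only in $L^2$ after subtracting the $\la$-reference, because $\G(z)\xi$ itself is not smooth. This is exactly why we route the argument through the identity above with $\G(-\la)$: it avoids any need for uniform bounds in $\nn$. The identity should first be proved for $z$ off the real axis and then extended by analyticity. Continuity of the full limit $\G^\pm(\mu)$ in a neighborhood of $\mu$ then follows, since $n_0$ is locally constant away from $\ome\N_0$. At points $\mu\in\ome\N_0$ we use $s>1$ and the continuity of $r_0^\pm$ at $0$ from Proposition \ref{LAP0}, taking $n_0$ so that the threshold term is included in the low energy part.
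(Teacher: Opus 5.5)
Your proof is correct and follows essentially the paper's argument: the same split at $n_0=\lfloor \mu/\omega\rfloor$, the finite low-energy sum handled by Proposition \ref{LAP0} via $\|\phi_{\mathbf n}\xi\|_{L^2_s}\leq \|\langle x\rangle^s\phi_{\mathbf n}\|_{L^\infty}\|\xi\|_{L^2}$, and the high-energy part controlled by the fact that $\mu$ lies in the resolvent set of the truncated Hamiltonian on $\operatorname{Ran}\Pi$. You rightly restrict to $\operatorname{Ran}\Pi$ and use the threshold $\omega(n_0+1)$, which the paper states loosely.

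The only variation is in the high-energy step. The paper gets analyticity of $\mathcal{G}^>(z)$ by duality from $\text{Tr}_\pi R_0^>(\bar z)$, which requires $\mathcal{D}(H_{0,n_0})\subset H^2(\mathbb{R}^6)$ and the trace theorem. Your identity $\mathcal{G}^>(z)=\Pi\,\mathcal{G}(-\lambda)+(z+\lambda)(H_{0,n_0}-z)^{-1}\Pi\,\mathcal{G}(-\lambda)$ instead needs only $L^2$ bounds on the truncated resolvent together with $\mathcal{G}(-\lambda)\in\mathcal{B}(L^2(\mathbb{R}^3),L^2(\mathbb{R}^6))$.
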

\begin{proof}

We use  similar strategy as before, with a decomposition in an high-energy part  $\G^>(\mu+i\ve)$ and low-energy part $\G^<(\mu+i\ve)$ defined as follows:

\begin{align}
\G^<(\mu+i\ve ; \x, \y, \y')& = \sum_{ |\nn| \leq n_0} \frac{e^{i\sqrt{\mu - \ome |\nn| +i\ve}|\x-\y'|}}{4\pi |\x-\y'| } \phi_\nn (\y) \phi_\nn (\y')  \\
\G^> (\mu+i\ve ; \x, \y, \y')&=      \sum_{ |\nn| > n_0} \frac{e^{i\sqrt{\mu - \ome |\nn|+i\ve}|\x-\y'|}}{4\pi |\x-\y'| } \phi_\nn (\y) \phi_\nn (\y') \\
\G^{+,<} (\mu ; \x, \y, \y')& = \sum_{ |\nn| \leq n_0} \frac{e^{i\sqrt{\mu - \ome |\nn| +i\ve}|\x-\y'|}}{4\pi |\x-\y'| } \phi_\nn (\y) \phi_\nn (\y')  \\
\G^> (\mu ; \x, \y, \y')&=      \sum_{ |\nn| > n_0} \frac{e^{-\sqrt{\ome |\nn|-\mu}|\x-\y'|}}{4\pi |\x-\y'| } \phi_\nn (\y) \phi_\nn (\y').
\end{align}
For the low-energy part, we have a finite number of terms. Then it is sufficient to notice that for $\forall s \geq 0$ and $\forall \nn\in\N_0$, $\langle x\rangle^s \phi_\nn \in L^\infty (\R^3)$, and the convergence claim follows by Proposition \ref{LAP0}.
Namely we have
\begin{align}
& \| \G^<(\mu+i \ve) f  - \G^{+,<} (\mu) f \|_{ L^2_{-s} (\R^3)\otimes L^2 (\R^3) } 
 \leq c  \sum_{ |\nn| \leq n_0} \| r_0(\mu+i\ve  )\phi_\nn f  - r_0 (\mu)\phi_\nn f \|_{L^2_{-s} (\R^3) }\\
&\leq c  \sum_{ |\nn| \leq n_0} \| r_0(\mu+i\ve  ) - r_0 (\mu) \|_{\B(L^2_s(\R^3), L^2_{-s} (\R^3))} \| \phi_\nn f \|_{L^2_s (\R^3)}  \\
& \leq c  \sum_{ |\nn| \leq n_0} \| r_0(\mu+i\ve  ) - r_0 (\mu) \|_{\B(L^2_s(\R^3), L^2_{-s} (\R^3))} \| \langle x\rangle^s  \phi_\nn \|_{L^\infty (\R^3) } \|  f \|_{L^2 (\R^3)} .
\end{align}

\n
For the high-energy part, we  recall that in the proof of Proposition \ref{LAPR0} we  showed that $ R_0^> (\mu+i\ve )$ is analytical with values in $\B(L^2 (\R^6), H^2 (\R^6))$ and therefore 
$\Tr R_0^> (\mu+i\ve )$ is analytical with values in $\B(L^2 (\R^6), L^2 (\R^3))$. By duality $\G^> (\mu+i\ve)$ converges to  $\G^> (\mu)\in \B(L^2 (\R^3), L^2 (\R^6))  $.
\end{proof}

\vs

\n
We now approach the study of the boundary limits of $(\Gamma (z) +\al )^{-1}$. 

\begin{prop} \label{LAPGamma}
There exists a set $\NN$ such that for 
$\mu \notin \NN$ the following limits exist in the uniform topology of  $\B(L^2(\R^3) )$
\[
\lim_{\ve \downarrow 0} (\Gamma (\mu \pm i \ve) +\al )^{-1} = (\Gamma^\pm (\mu) +\al )^{-1}
\]
and the limits are continuous in $\mu$. Moreover we have
$\NN \cap (\R^+\setminus \N) \subset \E$.
\end{prop}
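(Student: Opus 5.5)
We follow the factorization used in the proof of Proposition \ref{meroR}. Write
\[
(\Gamma(z)+\al)^{-1} = (\Gamma(-\la)+\al)^{-1/2}(I-M(z))^{-1}(\Gamma(-\la)+\al)^{-1/2},
\]
with $M(z)=(\la+z)(\Gamma(-\la)+\al)^{-1/2}\G^\ast(-\la)\G(z)(\Gamma(-\la)+\al)^{-1/2}$. The outer factors are fixed bounded operators, so it suffices to prove three things: $M(\mu\pm i\ve)$ has a limit $M^\pm(\mu)$ in $\B_\infty(L^2(\R^3))$, uniformly and continuously in $\mu$; $I-M^\pm(\mu)$ is invertible outside a discrete set; and the inverse converges. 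Then $\NN$ will be the set of $\mu$ where $I-M^\pm(\mu)$ fails to be invertible, together with the thresholds if needed.

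\textbf{Step 1 (limits of $K$).} Write $K(z)=\G^\ast(-\la)\G(z)$ and split $\G(z)=\G^<(z)+\G^>(z)$ as in Proposition \ref{LAPpot}. For the high part, $\G^>(\mu\pm i\ve)\to\G^>(\mu)$ in $\B(L^2(\R^3),L^2(\R^6))$, so $\G^\ast(-\la)\G^>$ converges in norm. It is compact because $\G^\ast(-\la)$ is compact.

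The low part is a finite sum of terms $\G^\ast(-\la)\,[r_0(\mu\pm i\ve-\ome|\nn|)\phi_\nn\,\cdot\,]\otimes\phi_\nn$. Here we need $\G^\ast(-\la)$ to act boundedly, and compactly, from $L^2_{-s}(\R^3)\otimes L^2(\R^3)$ into $L^2(\R^3)$, at least on functions of the form $u\otimes\phi_\nn$. To see this, compute explicitly: the $y$-integration against $\phi_\nn$ gives $\G^\ast(-\la)(u\otimes\phi_\nn)(\y)=\phi_\nn(\y)\,[r_0(-\la-\ome|\nn|)u](\y)$. Since $\langle \y\rangle^{s}\phi_\nn\in L^\infty$, and $r_0(-\la')$ maps $L^2_{-s}$ into $H^2_{-s}$ (exponentially decaying kernel), each term is bounded, indeed compact thanks to the Gaussian decay of $\phi_\nn$. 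Combined with Proposition \ref{LAP0}, this gives norm convergence of $K^<(\mu\pm i\ve)$ to compact limits that are continuous in $\mu$. For $\mu/\ome\in\N_0$ we use $s>1$. Hence $M^\pm(\mu)$ exist, are compact, and are continuous.

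\textbf{Step 2 (invertibility).} By the Fredholm alternative, $I-M^\pm(\mu)$ is invertible exactly when its kernel is trivial. Continuity of the inverse, and norm convergence $(I-M(\mu\pm i\ve))^{-1}\to(I-M^\pm(\mu))^{-1}$, then follow from the standard perturbation argument. On each interval $(\ome n,\ome(n+1))$ the family $M^\pm$ extends analytically across the real axis. Indeed, the high part is analytic there, and the low part consists of finitely many terms with kernels $e^{i\sqrt{\mu-\ome|\nn|}|\x-\y'|}$ that continue analytically into the weighted spaces, as long as $\phi_\nn$ damps the exponential growth. This is plausible because the Gaussian weight beats the exponential; if it fails, we instead restrict to the weights available. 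The analytic Fredholm theorem on each interval then shows that the singular set is discrete. We define $\NN$ as the closure of that set together with, possibly, the thresholds.

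\textbf{Step 3 (embedded eigenvalues), the main obstacle.} Suppose $\mu\notin\N$ (that is, $\mu/\ome\notin\N_0$) and $(I-M^+(\mu))\eta=0$ with $\eta\neq0$. Set $\xi=(\Gamma(-\la)+\al)^{-1/2}\eta$, so that $(\Gamma^+(\mu)+\al)\xi=0$, and let $\psi=\G^+(\mu)\xi$. Formally $\psi$ solves $H\psi=\mu\psi$, but a priori $\psi$ lies only in $L^2_{-s}$; the task is to show $\psi\in L^2$.

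First, take the imaginary part of $\langle(\Gamma^+(\mu)+\al)\xi,\xi\rangle=0$. The symmetric operators contribute real parts, so $\text{Im}\langle\Gamma^+(\mu)\xi,\xi\rangle=0$. By \eqref{holomo} and \eqref{agmontrace}, this imaginary part equals a sum over $|\nn|\leq n_0$ of $\frac{\pi}{2\sqrt{\mu-\ome|\nn|}}\int_{|\k|=\sqrt{\mu-\ome|\nn|}}|\widehat{\phi_\nn\xi}(\k)|^2d\sigma$, up to sign. Each term is nonnegative, so every one of these traces on the energy shells vanishes.

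Given vanishing traces, an Agmon-type division lemma (in Fourier variables, $\widehat{\phi_\nn\xi}/(|\k|^2-\mu+\ome|\nn|)$ is in $L^2$ when $\widehat{\phi_\nn\xi}$ is Hölder continuous and vanishes on the sphere) shows that each low term $r_0^+(\mu-\ome|\nn|)(\phi_\nn\xi)$ lies in $L^2$. This requires regularity of $\widehat{\phi_\nn\xi}$, which follows from the decay of $\phi_\nn\xi$ via the Gaussian $\phi_\nn$. The high part is already in $L^2$. Hence $\psi\in L^2(\R^6)$. Moreover $\psi$ satisfies the boundary condition \eqref{BC} and $(H-\mu)\psi=0$, which is checked through \eqref{doha} with $\phi^\la=(\la+\mu)R_0(-\la)\psi$. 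Also $\psi\neq0$, since its singular coefficient is $\xi\neq0$. Therefore $\mu\in\E$, which gives $\NN\cap(\R^+\setminus\N)\subset\E$.

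The delicate part is making the division lemma work uniformly for the finitely many channels. A secondary difficulty is justifying the analytic continuation used for discreteness in Step 2.
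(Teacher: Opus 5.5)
Your proposal is correct and follows essentially the paper's route: the factorization through $M(z)$, the low/high-energy splitting of $K(z)$ with Fredholm theory for $I-M^\pm(\mu)$, and the imaginary-part cancellation that forces $\widehat{\phi_\nn\xi}$ to vanish on the energy shells, which in turn makes $\G^+(\mu)\xi$ an $L^2$ eigenfunction. The differences are only technical. You obtain convergence of the low part by composing weighted-space bounds rather than through the paper's explicit elliptic-coordinate kernel and Hilbert--Schmidt estimate, and you prove the division lemma directly from the smoothness of $\widehat{\phi_\nn\xi}$ instead of citing it. Your analytic-continuation argument for discreteness is not needed here, since this proposition only asserts the existence of some set $\NN$.
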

\begin{proof}

We first discuss the boundary limit of $K(\mu+i\ve)$, defined in \eqref{Kappa}. Again we adopt a decomposition into a low-energy term and a high-energy term, that is we have
\beq \label{drugo}
K(\mu+i\ve) = K^<(\mu+i\ve) + K^>(\mu+i\ve)= \G^\ast(-\la ) \G^<(\mu+i\ve) +  \G^\ast(-\la ) \G^>(\mu+i\ve).
 \eeq
Since $ \G^> (\mu+i\ve )$ converge to  $ \G^> (\mu )$ as $\ve \to 0$ in $\B (L^2 (\R^3) , L^2(\R^6))$, see Proposition  \ref{LAPpot}, it follows that
$ K^>(\mu+i\ve)$ converge to 
\beq
 K^>(\mu)=  \G^\ast(-\la ) \G^>(\mu) \in  \B_\infty (L^2 (\R^3) ),
\eeq
and $ K^>(\mu)$ is a $\mu$ continuous valued compact operator. 
We claim that an analogous statement holds for $ K^<(\mu+i\ve)$ and  $ K^{+,<}(\mu):=  \G^\ast(-\la ) \G^{+,<}(\mu)$. 
Taking into account the orthogonality of the eigenvectors $\phi_\nn$,
the integral kernel of $K^<(\mu+i\ve)$ and $ K^{+,<}(\mu)$  are  given by:
\begin{align*}
K^<(\mu+i\ve; \y' , \y'')
& =  \sum_{ |\nn| \leq n_0} \phi_\nn (\y') \phi_\nn (\y'') \int_{\R^3} \frac{e^{i\sqrt{\mu - \ome |\nn| +i\ve}|\x-\y'|}  }{4\pi |\x-\y'| }  \frac{e^{-\sqrt{\la} |\x-\y''|}  }{4\pi |\x-\y''| } d\x \\
K^{<,+}(\mu; \y' , \y'')
& =  \sum_{ |\nn| \leq n_0} \phi_\nn (\y') \phi_\nn (\y'') \int_{\R^3} \frac{e^{i\sqrt{\mu - \ome |\nn| }|\x-\y'|}  }{4\pi |\x-\y'| }  \frac{e^{-\sqrt{\la} |\x-\y''|}  }{4\pi |\x-\y''| } d\x .\\
\end{align*}
Moreover, using elliptic coordinates, the integral in $\x$ can be computed explicitly.
For $b \in \R$, $b>0$, and $|\Im a|< b $ we define
\beq 
\mu = \frac{ |\x-\y'|+ |\x-\y''|}{ |\y'-\y''|} \in [1,+\infty)\,,   \qquad \nu = \frac{ |\x-\y'|- |\x-\y''|}{|\y'-\y''|} \in [-1,1]
\eeq
and $\varphi \in [0,2\pi)$  the rotation angle around the axis $\y' \y''$. Then, by  straightforward calculations, we have
\begin{align*}
\int_{\R^3} \frac{e^{ia|\x-\y'|}  }{ |\x-\y'| }  \frac{e^{-b |\x-\y''|}  }{ |\x-\y''| } d\x 
&=\int_1^{\infty} d\mu \int_{-1}^1 d\nu \int_0^{2\pi} d\varphi \, \frac{|\y'- \y''|^3}{8}\,  (\mu^2-\nu^2) \, \frac{ e^{i a \frac{|\y'- \y''|}{2}(\nu+\mu) } e^{ -b  \frac{|\y'- \y''|}{2}(\mu-\nu)    } }{ \frac{|\y'- \y''|}{2}(\nu+\mu) \frac{|\y'- \y''|}{2}(\mu-\nu) }\\
&= 4\pi \frac{1}{ |\y'- \y''|}\frac{   e^{ia    |\y'- \y''|} - e^{-b|\y'- \y''|} }{a^2 + b^2}.
\end{align*}
Then $K^<(\mu) $ and $K^{<,+}(\mu)$ read:
\begin{align*}
K^<(\mu+i\ve; \y' , \y'')
& =  \sum_{ |\nn| \leq n_0} \phi_\nn (\y') \phi_\nn (\y'')  \f{1}{4\pi |\y'-\y''| }  \f{e^{i\sqrt{\mu - \ome |\nn| +i\ve}|\y'-\y''|}    -  e^{-\sqrt{\la} |\y'-\y''|}  }{ \la + \mu - \ome |\nn| +i\ve } \\
K^{<,+}(\mu; \y' , \y'')
& =  \sum_{ |\nn| \leq n_0} \phi_\nn (\y') \phi_\nn (\y'')  \f{1}{4\pi |\y'-\y''| } \f{e^{i\sqrt{\mu - \ome |\nn| }|\y'-\y''|}    -  e^{-\sqrt{\la} |\y'-\y''|}  }{ \la + \mu - \ome |\nn|  }.\\
\end{align*}
We notice that
\beq \label{ricottaspinaci}
\lf|  \f{1}{4\pi |\y'-\y''| }  \f{e^{i\sqrt{\mu - \ome |\nn| +i\ve}|\y'-\y''|}    -  e^{-\sqrt{\la} |\y'-\y''|}  }{ \la + \mu - \ome |\nn| +i\ve } \ri| \leq c.
\eeq
Using \eqref{ricottaspinaci}, it is straightforward to prove that $K^<(\mu+i\ve )$ and $K^{<,+} (\mu )$ are Hilbert-Schmidt operators and that $\| K^<(\mu+i\ve) \|_{HS}$ is  uniformly bounded in $\ve$.  Moreover we have
\begin{multline}
\| K^<(\mu+i\ve ) - K^<(\mu ) \|_{HS}^2  \leq c  \sum_{ |\nn| \leq n_0}\int_{\R^6} d\y' \, d\y'' \, \left|  \phi_\nn (\y') \phi_\nn (\y'') \ri|^2 \f{1}{16\pi^2 |\y'-\y''|^2 }\\
\lf| \f{e^{i\sqrt{\mu - \ome |\nn| +i\ve}|\y'-\y''|}    -  e^{-\sqrt{\la} |\y'-\y''|}  }{ \la + \mu - \ome |\nn| +i\ve }-  \f{e^{i\sqrt{\mu - \ome |\nn| }|\y'-\y''|}    -  e^{-\sqrt{\la} |\y'-\y''|}  }{ \la + \mu - \ome |\nn|  }    \ri|^2.
\end{multline}
Hence $\| K^<(\mu+i\ve ) - K^{<,+}(\mu ) \|_{HS}\to 0$ as $\ve\to 0$ by dominated convergence.
Again by dominated convergence, it is straightforward to see that $ K^{<,+}(\mu) $ is continuous in $\mu$. 
Then we obtain    by construction the  boundary values of $\Ga (z) $ and $M(z)$ 
\beq \label{scrivomale}
\Gamma^\pm(\mu) = \Gamma (-\la) - (\la+\mu)  K^\pm (\mu) \qquad 
M^\pm (\mu) =  (\la+\mu)  (\Gamma(-\la) + \al )^{-1/2} K^\pm (\mu)  (\Gamma(-\la) + \al )^{-1/2} .
\eeq
Moreover  $M^\pm (\mu) \in  \B_\infty (L^2 (\R^3) ) $ and they are continuous in $\mu$ in the uniform topology.

\n
Let us now define 
\[
\NN=\{ \mu\in[0,\infty), \, | \exists f\in L^2(\R^3),\, f\neq 0,\,  (I-M^+ (\mu))f=0 \}.
\]
For $\mu \notin \NN$, by Fredholm theory, $(I-M(\mu+i\ve))^{-1}$ converge to  $(I-M^+(\mu))^{-1}$  in $\B(L^2(\R^3))$  and $(I-M^+(\mu))^{-1}$ is continuous in $\mu$. 
Therefore we can take the boundary values of \eqref{waraku} and we obtain
\beq \label{trenitalia}
(\Gamma^\pm (\mu) + \al )^{-1} =  (\Gamma(-\la) + \al )^{-1/2}(I -M^\pm (\mu))^{-1} (\Gamma(-\la) + \al )^{-1/2} \qquad \mu \notin\NN.
\eeq

\n
Now we prove that $\NN \cap (\R^+\setminus \N) \subset \E$ (recall that $\E$ denotes the set of positive eigenvalues of $H$ of finite multiplicity).
Let us consider $\mu\in \NN$ such that $\mu \notin \N$. Then we have 
\[
\| f\|_{ L^2(\R^3) }^2 = \langle f , M^+ (\mu)f \rangle_{ L^2(\R^3) }
\]
which implies 
\beq \label{cancellation1}
 \text{Im} \langle f , M^+ (\mu)f \rangle_{ L^2(\R^3) }=0.
\eeq
Moreover, since $K^> (\mu) $ is self-adjoint, the previous identity reduces to
\[
0
=  \lf\langle  (\Gamma(-\la) + \al ) ^{-1/2} f, (  K^{<,+}(\mu)-   K^{<,-}(\mu))  (\Gamma(-\la) + \al ) ^{-1/2} f \ri\rangle_{ L^2(\R^3) },
\]
with
\begin{align*}
 (  K^{<,+}(\mu)-   K^{<,-}(\mu)) ( \y' , \y'')
 & =  \sum_{ |\nn| \leq n_0} \phi_\nn (\y') \phi_\nn (\y'') \int_{\R^3} \lf( \frac{e^{i\sqrt{\mu - \ome |\nn| }|\x-\y'|}  }{4\pi |\x-\y'| }-  \frac{e^{-i\sqrt{\mu - \ome |\nn| }|\x-\y'|}  }{4\pi |\x-\y'| }   \ri)   \frac{e^{-\sqrt{\la} |\x-\y''|}  }{4\pi |\x-\y''| } d\x \\
 & =  \sum_{ |\nn| \leq n_0} \phi_\nn (\y') \phi_\nn (\y'') \int_{\R^3} 2i\, \text{Im}\, r^+_0 (\mu-\ome|\nn|, \x-\y')   \frac{e^{-\sqrt{\la} |\x-\y''|}  }{4\pi |\x-\y''| } d\x .
\end{align*}
Using Fourier transform, \eqref{agmontrace} and \eqref{cancellation1} imply
\beq \label{papa}
\lf. \widehat{ \phi_\nn  (\Gamma(-\la) + \al ) ^{-1/2} f }\ri|_{|\k|= \sqrt{\mu-\ome |\nn|} }=0 \qquad \qquad  |\nn|\leq n_0.
\eeq
Let us define $g=  (\Gamma(-\la) + \al ) ^{-1/2} f\in L^2(\R^3) $ and $\psi = \G^+ (\mu) g$. We claim   that $\psi \in L^2(\R^6)$ and it is an eigenvector of $H$ with eigenvalue $\mu$. 

\n
We write $\psi = \G^{+,<} (\mu) g + \G^> (\mu) g$. Since $g\in L^2(\R^3)$ then $\| \G^> (\mu) g \|_{ L^2(\R^6)}\leq c \|g\|_{L^2(\R^6)}$ due to the previous arguments. 
While in general we do not expect $\G^{+,<} (\mu) g$ to be in $\in L^2(\R^6)$, condition \eqref{papa} provides a crucial cancellation.
We recall that (see \cite{R} Remark 4.1 and \cite{BD})  if  we have  $\text{Im}\, r_0^+ (\mu) f =0$, with $f \in L^2_s (\R^3)$, $s>1/2$, and  $\mu>0$, 
 then $ r_0^+ (\mu) f  \in L^2 (\R^3)$ and $ \|  r_0^+ (\mu) f  \|_{ L^2 (\R^3)} \leq c \|f \|_{ L^2_s (\R^3)}$. 
Condition \eqref{papa} ensures that all the terms in  $ \G^{+,<} (\mu) g$ are in fact in $ L^2(\R^6)$ and we have
\[
 \| \G^{+,<} (\mu) g \|_{  L^2(\R^6)}\leq c     \sum_{ |\nn| \leq n_0} \| \phi_\nn g\|_{L^2_s (\R^3)} \leq c  \| g\|_{L^2 (\R^3)}.  
 \]
Collecting together the estimates for the low-energy part and the high-energy part, we obtain that \eqref{papa} implies 
\beq\label{mama}
 \| \G^{+} (\mu) g \|_{  L^2(\R^6)}\leq c     \| g\|_{L^2 (\R^3)}.
 \eeq
It remains to prove that $\psi \in \D (H) $ and $H\psi =\mu \psi$. First we write
\beq 
\psi = \G^{+} (\mu) g = \G^{+} (\mu) g - \G(-\la) g +\G(-\la) g \equiv \phi^\la + \G(-\la) g.
\eeq
The resolvent identity
\beq \label{holter}
\G^{+} (\mu) g - \G(-\la) g = (\la+\mu) R_0(-\la)  \G^{+} (\mu) g
\eeq
shows that $\phi^\la \in \D (H_0)$ since $ \G^{+} (\mu) g\in L^2(\R^6)$. Moreover $\psi $ satisfies the boundary condition $(\Ga(-\la) +\al ) g = \Tr \phi^\la$
since $\Tr \phi^\la= \Ga(-\la) g - \Ga^{+} (\mu) g$ and $( \Ga^{+} (\mu) +\al )g=0$ by construction, hence $\psi \in \D(H)$.
Finally we have
\begin{align*}
H \psi 
& = H_0 \phi^\la - \la \G(-\la) g \\
& =  (\la+\mu) H_0 R_0(-\la)  \G^{+} (\mu) g - \la \G(-\la) g \\
& =  (\la+\mu) (I- \la R_0(-\la)  )  \G^{+} (\mu) g - \la \G(-\la) g \\
&= \mu   \G^{+} (\mu) g +\la \lf( \G^{+} (\mu) g - \G(-\la) g - (\la+\mu) R_0(-\la)  \G^{+} (\mu) g \ri)\\
&= \mu   \G^{+} (\mu) g ,
\end{align*}
where in the last step we used \eqref{holter}. Therefore we have $\NN \cap (\R^+\setminus \N) \subset \E$.
\end{proof}

\vs

\n
We are now in position to prove Theorem \ref{LAPmain}. 
\begin{proof}

The boundary values of $R(z)$ on the real line exist in $\B (L^2_{s} (\R^3) , L^2_{-s} (\R^3)) \otimes \B (L^2(\R^3))$ and they are continuous in $\mu$ for $\mu \notin \NN$,
 by Proposition \ref{LAP0}, Proposition \ref{LAPpot} and Proposition \ref{LAPGamma}. 
 
 \n
It remains to prove is that $\E$ is discrete. 
Suppose by contradiction that $\E$ is not discrete, then there exists a sequence of positive eigenvalues $\{ \mu_n \}$ such that $\mu_n\to \mu$ and, using the same notation as above, there exists also $\{f_n \}$, $\{g_n\}$, $\{ \psi_n \}$ such that
\[
\psi_n =  \G^+ (\mu) g_n = \G^+ (\mu)  (\Gamma(-\la) + \al ) ^{-1/2} f_n.
\]
We do not know a priori if $\mu$ is a positive eigenvalue. We can assume that $\| f_n\|_{L^2(\R^3) } =1$, then there exists a subsequence, still denoted by $\{f_n\}$, such that $f_n \rightharpoonup f$.
Each $f_n$ satisfies 
\[
f_n = -\lf ( K^+ (\mu_n) - K^+(\mu) \ri) f_n -  K^+(\mu) f_n.
\]
The first term on the r.h.s. converges strongly to zero  by the continuity of $K^+(\mu)$, since $\| f_n\|_{L^2(\R^3) } =1$.
The second term on the r.h.s. converges strongly since  $K^+(\mu)$ is compact. Therefore we obtain that $f_n \to f$ in $L^2(\R^3)$. Then  $\| f\|_{L^2(\R^3) } =1$ and $f\neq 0$. Moreover $g_n \to g = (\Gamma(-\la) + \al ) ^{-1/2} f$ and $\|g\|_{L^2(\R^3) } \leq c $.
Furthermore $\| \psi_n \|_{L^2(\R^6) } \leq c $ by \eqref{mama}. 
Since the collection of the eigenvectors $\psi_n$ 
is an orthogonal set, we have $\psi_n \rightharpoonup 0$ which is absurd. Then $\E$ admits no accumulation point and it is discrete. 
The same argument shows that these positive eigenvalues must have finite multiplicity and therefore the proof of Theorem \ref{LAPmain} is complete. 
\end{proof}

\vs
\n
Further results on the spectral and the scattering properties of $H$ are direct   consequences of Theorem \ref{LAPmain}. 

\begin{cor}
$\;\sigma_{ac}(H) = [0,\infty)$, $\;\sigma_{sc}(H)=\emptyset\;$   and the wave operators exist and are complete.
\end{cor}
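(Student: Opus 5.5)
The plan is to deduce the Corollary from Theorem \ref{LAPmain} using the standard consequences of a limiting absorption principle: Stone's formula, the Kato smoothness criterion, and the Kato–Rosenblum / Birman type scattering results. Set
\[
\mathcal X := L^2_s(\R^3)\otimes L^2(\R^3), \qquad s>1 .
\]
With $s>1$ the statement of Theorem \ref{LAPmain} holds uniformly for all $\mu\in[0,\infty)\setminus\NN$, including the thresholds $\mu/\ome\in\N_0$. The space $\mathcal X$ is dense in $L^2(\R^6)$.

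\textbf{Step 1: spectral measures on $\mathcal X$.} Let $I\subset(0,\infty)\setminus(\NN\cup\ome\N_0)$ be a compact interval and let $f\in\mathcal X$. By Stone's formula,
\[
\langle E_H(I)f,f\rangle=\lim_{\ve\downarrow0}\f{1}{\pi}\int_I \text{Im}\,\langle R(\mu+i\ve)f,f\rangle\,d\mu .
\]
By Theorem \ref{LAPmain}, the integrand converges uniformly on $I$ to the continuous function $\text{Im}\langle R^+(\mu)f,f\rangle$. Hence the spectral measure of $f$ restricted to $I$ has a continuous, bounded density. Consequently $E_H(I)f\in\mathcal H_{ac}(H)$.

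\textbf{Step 2: no singular continuous spectrum.} The set $\NN\cup\ome\N_0$ is discrete, hence countable. Therefore $(0,\infty)$ minus this set is a countable union of compact intervals $I$ of the type used in Step 1. Since $\mathcal X$ is dense, $E_H(I)L^2(\R^6)\subset \mathcal H_{ac}$ for each such $I$. The remaining countable set $\{0\}\cup\NN\cup\ome\N_0$ carries no continuous spectral measure. The negative spectrum is discrete, because $\sigma_{ess}(H)=[0,\infty)$ as recalled from \cite{CDF}. Together these give $\sigma_{sc}(H)=\emptyset$.

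\textbf{Step 3: the absolutely continuous spectrum.} We have $\sigma_{ac}(H)\subset\sigma_{ess}(H)=[0,\infty)$. For the reverse inclusion, either invoke the existence and completeness of the wave operators from \cite{CDF}, which gives $H_{ac}$ unitarily equivalent to $H_0$ with $\sigma_{ac}(H_0)=[0,\infty)$, or use Step 4 directly.

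\textbf{Step 4: wave operators via smoothness.} I would use the Kato smooth-perturbation method. By Theorem \ref{LAPmain}, the operator $\langle \x\rangle^{-s}\otimes I$ is $H$-smooth on each interval $I$ from Step 1. Proposition \ref{LAPR0} gives the same conclusion for $H_0$. The difference $R(-\la)-R_0(-\la)=\G(-\la)(\Ga(-\la)+\al)^{-1}\G^*(-\la)$ factorises through $L^2(\R^3)$. The main obstacle is to show that $\langle\x\rangle^{s}\G(-\la)$ is bounded, so that the difference factorises as $A^*B$ with $A,B$ locally smooth. This holds at least locally in energy, and I would verify it from the exponential decay of the kernel in $|\x-\y'|$ together with the Gaussian decay of $\phi_\nn(\y')$. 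Kato's local smoothness theorem, applied on each $I$ and combined with the density of $\bigcup_I E(I)$ in $\mathcal H_{ac}$, then yields existence and completeness of $W_\pm$.
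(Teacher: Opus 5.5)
Your Steps 1--3 are the standard consequences of a limiting absorption principle. They are what the paper's one-line proof leaves to \cite{RS}, and they are fine. Stone's formula is applied on compact intervals avoiding the closed discrete set $\NN\cup\ome\N_0$; the exceptional set is countable; the negative spectrum is discrete; and \cite{CDF} supplies $\sigma_{ess}(H)=[0,\infty)$ and $\sigma_{ac}(H)\supset[0,\infty)$. One small point: passing to the limit in $\int_I\text{Im}\langle R(\mu+i\ve)f,f\rangle\,d\mu$ needs a bound uniform in $\mu\in I$, $0<\ve<1$, i.e. continuity of $R(z)$ up to the axis jointly in $(\mu,\ve)$. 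Continuity of $R^\pm(\mu)$ alone does not give this, but the proofs of Propositions \ref{LAPR0}--\ref{LAPGamma} do.

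The genuine gap is Step 4, which you offer as an independent proof of existence and completeness. The bound you isolate as the main obstacle is false: $\langle\x\rangle^{s}\G(-\la)$ is unbounded from $L^2(\R^3)$ to $L^2(\R^6)$ for every $s>1/2$, in particular for your $s>1$.
\begin{itemize}
\item The exponential decay of the kernel in $|\x-\y'|$ only moves the weight from $\x$ onto the charge.
\item The Gaussian decay of $\phi_\nn$ cannot absorb it, because it is not uniform in $\nn$: the $\phi_\nn$ with $|\nn|=n$ extend to $|\y|\sim\sqrt{n/\ome}$, and $\G(-\la)$ involves all $\nn$. It helps only for the finitely many low-energy modes.
\item Near $\pi$ one has $\G(-\la)\xi\approx\xi(\y)/|\x-\y|$ on a layer of width $\sim\langle\y\rangle^{-1}$, the local decay length set by the oscillator potential. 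So, roughly, $\|\langle\x\rangle^{s}\G(-\la)\xi\|^2\sim\int\langle\y\rangle^{2s-1}|\xi(\y)|^2\,d\y$; test with a bump translated to distance $R$.
\end{itemize}
Choosing $s>1$ to cover the thresholds is counterproductive. The thresholds form a countable set and can simply be discarded, and a larger weight only makes the middle factor worse.

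A factorization that does work goes through the trace rather than weights.
\begin{itemize}
\item For $\psi=\phi^\la+\G(-\la)\xi\in\D(H)$ and $g\in\D(H_0)$, \eqref{doha}--\eqref{azha} give $\langle H\psi,g\rangle-\langle\psi,H_0g\rangle=-\langle J\psi,\Tr g\rangle_{L^2(\R^3)}$, where $J\psi:=\xi$ is the charge.
\item On intervals $I$ avoiding thresholds, $\Tr$ is $H_0$-smooth on $I$. Indeed $\Tr\,\text{Im}R_0(\mu+i\ve)\,\Tr^{*}$ is uniformly bounded: the high-energy part is $O(\ve)$, and the low-energy part is a finite sum of terms $\phi_\nn\,\text{Im}\,r_0(\mu-\ome|\nn|+i\ve)\,\phi_\nn$, exactly as in the proof of \eqref{volta}.
\item On intervals also avoiding $\NN$, $J$ is $H$-smooth. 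From $JR(z)f=(\Ga(z)+\al)^{-1}\Tr R_0(z)f$ you get $J\,\text{Im}R(z)\,J^{*}=\ve\,(JR(z))(JR(z))^{*}=(\Ga(z)+\al)^{-1}[\Tr\,\text{Im}R_0(z)\,\Tr^{*}][(\Ga(z)+\al)^{-1}]^{*}$. This is locally uniformly bounded by Proposition \ref{LAPGamma}.
\end{itemize}
Kato's smooth-perturbation theory needs only such a boundary-form identity $\langle Hf,g\rangle-\langle f,H_0g\rangle=\langle Af,Bg\rangle$. In its local version it gives existence and completeness of the local wave operators on each such $I$, and since the excluded set is countable, this yields $W_\pm$ on all of $\mathcal{H}_{ac}(H)$. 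Alternatively, take existence and completeness directly from \cite{CDF}, as the paper itself indicates.
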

\begin{proof}
All these statement follow from standard results, see \cite{RS}. Notice that part of these results were proved also in \cite{CDF} by different techniques.
\end{proof}

\vs

\begin{rem}
In the proof of Theorem \ref{LAPmain} we have not characterized the singular values $\mu\in \NN\cap \N$. If we had a non trivial solution of $ (I+M^+ (\mu))f=0$ for such $\mu$, and we would repeat the previous arguments
we would not be able to prove that $\psi \in L^2 (\R^6)$. In fact,  the terms in $\langle f,  M^+ (\mu) f\rangle$ with $ |\nn|=n_0 = \mu/\ome$ would be real and therefore \eqref{papa} would hold for $ |\nn|<n_0$. 
In other words,  we would be not able to infer that
\[
\left. \widehat{ \phi_\nn  (\Gamma(-\la) + \al ) ^{-1/2} f }\ri|_{|\k|= \sqrt{\mu-\ome|\nn|} }=0 \qquad \qquad  |\nn|= n_0.
\]
This would imply that the corresponding terms in $\G^+ (\mu) g $ have a slower decay in the $\x$ variable and do not belong in $L^2 (\R^6)$. In particular they would decay at infinity like $|\x|^{-1}$
 like a zero energy resonance in the one-body case. 
\end{rem}
\begin{rem}
For the one-body case, the proof that positive eigenvalues do not exists relies on some continuation properties of $-\Delta$, see \cite{IJ}, or  some properties of the virial in conjunction with Mourre theory, see \cite{FH}.
Such technical results do not apply to $H_0$ due to the presence of the harmonic hoscillator. The extension of such results to our case is outside the scopes of this paper.
\end{rem}

\vs

\section{Stationary scattering theory}
In this section we develop the stationary scattering approach for $H$. We follow and adapt \cite{K} (for general results in the case of singular interactions, studied with different techniques,  see also \cite{MP1,MP2,MP3}). 
%
For sake of notation, in the following we denote
\[
E=E (\k, \nn) = |\k|^2 + \ome |\nn| \qquad E'=E  (\k', \nn') =  {|\k'|^2} + \ome |\nn'| .
\]
Then  we can cast \eqref{borbotto} 
 in a more compact form
\begin{align}
\Phi_\pm (\k, \nn) & = \Phi_0 (\k, \nn) + \G^\pm (  E(\k,\nn)  ) ( \Ga^\pm (  E(\k,\nn) ) + \al )^{-1} \Tr \Phi_0 (\k, \nn) \\
&= \lf( I  + \G^\pm ( E(\k,\nn) ) ( \Ga^\pm (  E(\k,\nn) ) + \al )^{-1} \Tr \ri)  \Phi_0 (\k, \nn) \label{bari}.
\end{align}
Since $\Tr \Phi_0 (\k, \nn)\in L^2 (\R^3)$, we have that  $\Phi_\pm (\k, \nn)$ is well defined by Proposition \ref{LAPGamma} and it belongs to $ L^2_{-s} (\R^3) \otimes  L^2(\R^3)$ for $s>3/2$. 
For later convenience we define in $L^2 (\R^6)$ 
\begin{align*}
Q(z) &=  I  + \G (z   ) ( \Ga (  z) + \al )^{-1} \Tr    \qquad \text{Im } z\neq 0\\
Q^\pm (\mu) &=  I  + \G^\pm (  \mu ) ( \Ga^\pm (  \mu) + \al )^{-1} \Tr \qquad \mu>0 \;\; \mu\notin \NN.
\end{align*}


\n
We also define  the formal adjoint operators w.r.t. the inner product of $L^2(\R^6)$
\begin{align*}
Q^\ast (z) &=  I  + \de_\pi ( \Ga^\ast (  z) + \al )^{-1}  \G (z   )   \qquad \text{Im } z\neq 0\\
Q^{\pm, \ast} (\mu) &=  I  +\de_\pi ( \Ga^{\pm,\ast} (  \mu) + \al )^{-1}  \G^{\pm ,\ast} (  \mu )  \qquad  \mu>0 \;\; \mu\notin \NN.
\end{align*}
Some remarks are in order. From the definition, we have $Q(z): H^s (\R^6)\mapsto L^2(\R^6)$ and  $Q^\ast (z):  L^2(\R^6)\mapsto  H^{-s} (\R^6)$, $s>3/2$ as bounded operators due to trace theorems,
but the results of the previous section, see in particular Proposition \ref{LAPR0} and Lemma \ref{LAPpot}, show that in fact $Q(z)$ is actually defined on functions with trace over $\pi$ in $ L^2(\R^3)$  with values in $ L^2(\R^6)$, like $\Phi_0$, and
for $Q^\ast (z)$ a dual statement holds. In the following we will often use this property.
For the boundary values, $Q^{\pm} (\mu) $ is defined on function with trace over $\pi$ in $ L^2(\R^3)$  with values in $ L^2_{-s}(\R^3)\otimes L^2(\R^3)$
and dual statement holds for $Q^{\pm, \ast} (\mu) $. Moreover,  they are continuous in $\mu$ when acting on such functions.
With this notation we can write
\beq \label{faenza}
\Phi_\pm (\k, \nn) = Q^\pm ( E(\k,\nn) ) \Phi_0 (\k, \nn) 
\eeq
and
\begin{align}
& R(z) = Q(z)\, R_0(z) \qquad \text{Im } z\neq 0 \label{prob1} \\
& R^\pm (\mu) = Q^\pm (\mu) \, R^\pm_0 (\mu) \qquad  \mu>0 \;\; \mu\notin \NN. \label{prob2}
\end{align}

\n
In the next Proposition we prove a crucial identity which relates $\Phi_\pm (\k, \nn)$ and $R^\pm ( \mu )$.
\begin{prop}
Let  $\mu>0\,,$ with  $\; \mu\notin \NN$. Then following identity holds in $\S'(\R^6) \otimes \S'(\R^6)$
\beq \label{volta}
\frac{1}{2\pi i } \lf( R^+ (\mu ) -  R^- (\mu) \ri) =
\sum_{ |\nn|<n_0}  \frac{|\k_n|}{2} \int_{\mathbb S^2}  \Phi_\pm (\k_n , \nn) \otimes \overline{ \Phi_\pm (\k_n , \nn) } d\boldsymbol{\ome} 
\eeq
 where 
\[
\k_n= \sqrt{\mu-\ome |\nn|} \, \boldsymbol{\ome}.
\]
\end{prop}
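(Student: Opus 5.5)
We prove \eqref{volta} by writing both sides as bilinear forms on Schwartz functions. The starting point is the factorisation \eqref{prob2},
\[
R^\pm(\mu)=Q^\pm(\mu)R_0^\pm(\mu).
\]
The key algebraic identity we aim for is
\beq
R^+(\mu)-R^-(\mu)=Q^{+}(\mu)\,\big(R^+_0(\mu)-R^-_0(\mu)\big)\,Q^{+,\ast}(\mu),
\eeq
and likewise with $Q^-$. Its free counterpart is explicit.

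\textbf{Free part.} By Proposition \ref{LAPR0}, only the low-energy part contributes. The high-energy part $R_0^>(\mu)$ has the same limit from both sides, so
\[
R_0^+(\mu)-R_0^-(\mu)=\sum_{|\nn|\leq n_0}\big(r_0^+(\mu-\ome|\nn|)-r_0^-(\mu-\ome|\nn|)\big)\otimes P_\nn .
\]
Stone's formula, equivalently \eqref{agmontrace} polarised, gives
\[
\tfrac{1}{2\pi i}\big(r_0^+(E)-r_0^-(E)\big)=\tfrac{\sqrt E}{2}\int_{\mathbb S^2}e_{\k}\otimes\overline{e_{\k}}\,d\boldsymbol{\ome},\qquad e_\k(\x)=(2\pi)^{-3/2}e^{i\k\cdot\x},\quad |\k|=\sqrt E .
\]
Hence
\[
\tfrac{1}{2\pi i}\big(R_0^+(\mu)-R_0^-(\mu)\big)=\sum_{|\nn|}\tfrac{|\k_n|}{2}\int_{\mathbb S^2}\Phi_0(\k_n,\nn)\otimes\overline{\Phi_0(\k_n,\nn)}\,d\boldsymbol{\ome}.
\]
The term with $\ome|\nn|=\mu$ carries weight $|\k_n|=0$, which is why the sum may be written over $|\nn|<n_0$ when $\mu/\ome\in\N_0$.

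\textbf{Interacting part.} Work first at $z=\mu+i\ve$ with the resolvent identity $R(z)-R(\bar z)=(z-\bar z)R(\bar z)R(z)$. Alternatively, use the second-resolvent-type relation $R(z)=Q(z)R_0(z)$ together with its adjoint $R(z)=R_0(z)Q^\ast(\bar z)$, which follows from self-adjointness $R(z)^\ast=R(\bar z)$. These give
\[
R(z)-R(\bar z)=Q(z)\big(R_0(z)-R_0(\bar z)\big)Q^\ast(\bar z)+\text{remainder}.
\]
The remainder has to be shown to vanish using the boundary condition $(\Ga(z)+\al)\xi=\Tr\phi$. Concretely, I would expand both sides using \eqref{continuation} and the identities
\[
\Gamma(z)-\Gamma(w)=(z-w)\,\G^\ast(\bar w)\G(z),\qquad \G(z)-\G(w)=(z-w)R_0(z)\G(w).
\]
Both follow from \eqref{holomo} and the resolvent identity. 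With these, the cross terms collapse to the stated form. We then pass to $\ve\downarrow 0$ using Propositions \ref{LAPR0}, \ref{LAPpot} and \ref{LAPGamma}, testing against $f,g\in\S(\R^6)$ so that all pairings converge. The result is
\[
\tfrac{1}{2\pi i}\langle (R^+-R^-)f,g\rangle=\sum_{|\nn|}\tfrac{|\k_n|}{2}\int_{\mathbb S^2}\langle Q^{\pm,\ast}(\mu)f,\Phi_0\rangle\,\overline{\langle Q^{\pm,\ast}(\mu)g,\Phi_0\rangle}\,d\boldsymbol{\ome}.
\]
Moving $Q^{\pm}$ onto $\Phi_0$ and using \eqref{faenza}, $Q^\pm(\mu)\Phi_0(\k_n,\nn)=\Phi_\pm(\k_n,\nn)$ with $E(\k_n,\nn)=\mu$, yields \eqref{volta}. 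This step is legitimate because $\Tr\Phi_0\in L^2(\R^3)$, which is exactly the domain on which $Q^\pm(\mu)$ was shown to act.

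\textbf{Main obstacle.} The main difficulty is the algebraic step: verifying that the remainder cancels exactly, i.e. that
\[
R^+-R^-=Q^\pm\big(R_0^+-R_0^-\big)Q^{\pm,\ast}
\]
with the correct choice of $\pm$ on each side. Here the imaginary part of $\Gamma^\pm$ must combine with that of $\G^\pm$. I would first try the direct approach. Compute
\[
\Gamma^+(\mu)-\Gamma^-(\mu)=\G^{-,\ast}(\mu)\big(\text{``}R_0^+-R_0^-\text{''}\big)\text{-type expression}\quad\text{(via }\Tr\text{ of the jump of }R_0\text{)},
\]
and then apply the identity
\[
A^{-1}-B^{-1}=A^{-1}(B-A)B^{-1}\quad\text{with }A=\Gamma^++\al,\;B=\Gamma^-+\al,
\]
which should produce the quadratic structure. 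A secondary technical point is justifying all pairings in $\S'\otimes\S'$ uniformly as $\ve\to0$. This is handled by the weighted bounds $s>3/2$ already established.
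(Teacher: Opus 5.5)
There is a genuine gap in the central step, and a second one in the limit $\ve\downarrow0$.

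\textbf{The central identity.} This is the step you yourself call the main obstacle, and you never establish it. The identity needed is $R(z)-R(\bar z)=Q(z)\big(R_0(z)-R_0(\bar z)\big)Q(z)^\ast$. Instead you posit a ``remainder'', to be cancelled through the boundary condition and an $A^{-1}-B^{-1}$ computation that you only say ``should'' work. Incidentally, your formula for $\Ga(z)-\Ga(w)$ has the opposite sign to \eqref{holomo}, as one sees by setting $w=-\la$.

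There is no remainder. It appears only because you pair $Q(z)$ on the left with $Q(\bar z)^\ast$ on the right, and because you list the resolvent identity and the factorisation \eqref{prob1} as alternatives instead of combining them. Insert $R(z)=Q(z)R_0(z)$ and $R(\bar z)=R(z)^\ast=R_0(\bar z)Q(z)^\ast$ into the resolvent identity. For $\text{Im}\,z\neq0$ this gives exactly
\[
R(z)-R(\bar z)=(z-\bar z)\,R(z)R(\bar z)=(z-\bar z)\,Q(z)R_0(z)R_0(\bar z)Q(z)^\ast=Q(z)\big(R_0(z)-R_0(\bar z)\big)Q(z)^\ast .
\]
The boundary condition is already encoded in \eqref{prob1}. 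This is the paper's argument. Choosing $z=\mu+i\ve$ or $z=\mu-i\ve$ gives the version with $Q^+$ or $Q^-$, which also settles your question about the signs.

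\textbf{The limit $\ve\downarrow0$.} This is not settled by weighted bounds. Since $Q(z)^\ast$ contains a $\de_\pi$-term, pairing with $f,g\in\S(\R^6)$ produces a contribution with $\de_\pi$ on both sides. That is, you need the trace on $\pi$ of $\big(\G(\mu+i\ve)-\G(\mu-i\ve)\big)\eta$. The weighted convergence of Propositions \ref{LAPR0} and \ref{LAPpot} says nothing about traces on $\pi$. The paper handles this with the low/high splitting
\begin{align*}
R_0(\mu+i\ve)-R_0(\mu-i\ve)&=\sum_{|\nn|\leq n_0}\big(r_0(\mu+i\ve-\ome|\nn|)-r_0(\mu-i\ve-\ome|\nn|)\big)\otimes P_\nn\\
&\quad+2i\ve\,R_0^>(\mu+i\ve)\,R_0^>(\mu-i\ve).
\end{align*}
The last term is uniformly bounded in $\B(H^{-2}(\R^6),H^2(\R^6))$, so it vanishes thanks to the factor $\ve$. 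Each of the finitely many low-energy terms, restricted to $\pi$ in both variables, has the form $\phi_\nn\, r_0(\nu\pm i\ve)\,\phi_\nn$. It converges in $\B(L^2(\R^3))$ because $\langle x\rangle^s\phi_\nn\in L^\infty(\R^3)$. Alternatively, by \eqref{holomo} that doubly traced operator is $-\big(\Ga(\mu+i\ve)-\Ga(\mu-i\ve)\big)$. It converges because $K(\mu\pm i\ve)$ does, as shown in the proof of Proposition \ref{LAPGamma}.

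With these two points supplied, the rest of your outline coincides with the paper's proof. That is, the free identity obtained from \eqref{agmontrace} and the final use of \eqref{faenza}.
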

\begin{proof}
We start from
\beq \label{level1}
\frac{1}{2\pi i } \lf( r_0^+ (\mu) -  r_0^- (\mu) \ri)=  \f{\sqrt{\mu}}{2}\,  \int_{\mathbb S^2}   e(\k) \otimes \overline{  e(\k) } d\boldsymbol{\ome}  \qquad \k= \sqrt{\mu} \boldsymbol{\ome} \qquad e(\k,\x)  =\frac{1}{(2\pi)^{3/2}} e^{i\k\cdot \x} 
\eeq
where 
identity \eqref{level1} must be understood in $\S'(\R^3) \otimes \S'(\R^3)$ and it is just a rephrasing of \eqref{agmontrace} keeping into account the rescaling of the surface measure.
Let us prove that 
\beq \label{level2}
\frac{1}{2\pi i } \lf( R_0^+ (\mu ) -  R_0^- (\mu) \ri) = \sum_{ |\nn|<n_0}  \f{|\k_n|}{2} \int_{\mathbb S^2} 
\Phi_0 (\k_n , \nn) \otimes \overline{ \Phi_0 (\k_n , \nn) } d\boldsymbol{\ome}    
\eeq
in  $\S'(\R^6) \otimes \S'(\R^6)$. 
Let us notice that, see the final part of the proof of Proposition \ref{LAPR0}, 
\begin{align*}
\frac{1}{2\pi i } \lf( R_0^+(\mu)  -  R_0^- (\mu) \ri) & = \frac{1}{2\pi i } \lf( R_0^{<,+}(\mu)  -  R_0^{<,-} (\mu) \ri) \\
& = \sum_ {|\nn|<n_0} \frac{1}{2\pi i } ( r_0^+ (\mu - \ome |\nn|) -  r_0^- (\mu - \ome |\nn|)  ) \otimes P_\nn
\end{align*}
since the high-energy parts cancels out. Using  \eqref{level1}, we obtain \eqref{level2}. We observe  that in the sum we can substitute $\{  \ome |\nn| \leq \mu \}$ with $\{  \ome |\nn| < \mu \}$ since the terms with $\ome |\nn|= \mu$ are null. 

\n
We now prove \eqref{volta}.  
Using the resolvent identity, we have
\begin{align*}
\frac{1}{2\pi i } \lf( R^+ (\mu ) -  R^- (\mu) \ri) 
&= \frac{1}{2\pi i } \lim_{ \ve \downarrow 0} \lf( R (\mu+i \ve ) -  R (\mu-i\ve) \ri) \\
&=\frac{1}{\pi  }  \lim_{ \ve \downarrow 0} \ve \,  R (\mu+ i \ve )  \, R (\mu- i\ve)  \\
&=\frac{1}{\pi  }  \lim_{ \ve \downarrow 0} \ve \, Q (\mu+ i \ve )  R_0 (\mu+ i \ve )  \, R_0 (\mu- i\ve) Q^\ast (\mu-  i \ve )  \\
&=\frac{1}{2\pi i } \lim_{ \ve \downarrow 0}  Q (\mu+ i \ve ) \lf( R_0 (\mu+i \ve ) -  R_0 (\mu-i\ve)  \ri)  Q^\ast (\mu+  i \ve ).
\end{align*}
The same argument can be repeated exchanging signs. We have already discussed the convergence of $(\Ga(\mu \pm i\ve) +\al)^{-1}$ to $(\Ga^\pm(\mu ) +\al)^{-1}$ in Proposition \ref{LAPGamma} and 
the convergence of $\G (\mu \pm i\ve) $ of $\G^\pm (\mu)$ in Proposition \ref{LAPpot}. The 
 delicate point is the presence of the traces in $Q$ and $Q^\ast$,
because we know that  $ R_0 (\mu\pm i \ve )$ converge only in $\B (L^2_{s} (\R^3) , L^2_{-s} (\R^3)) \otimes \B (L^2(\R^3))$. 
In fact,  it is sufficient to prove that also the traces of  $ R_0 (\mu\pm i \ve )$ converge. We write
\begin{align*}
 R_0 (\mu+i \ve ) -  R_0 (\mu-i\ve)  & =  R_0^< (\mu+i \ve ) -  R_0^< (\mu-i\ve)  +2 \ve  R_0^> (\mu+i \ve )  R_0^> (\mu-i\ve) \\
&=  \sum_{ |\nn|<n_0} \lf( r_0(\mu+i\ve -\ome |\nn|)-  r_0(\mu-i\ve -\ome |\nn|)\ri)  \otimes P_\nn  \\ 
&+ 2i \ve  R_0^> (\mu+i \ve )  R_0^> (\mu-i\ve).
\end{align*}
Let us focus on the last term on the r.h.s..
We known that $R_0^> (\mu\pm i\ve)$ converges in $\B(L^2 (\R^6), H^2 (\R^6))$ and by duality it converges also in  $\B(H^{-2} (\R^6), L^2 (\R^6) )$.
Therefore the product converges (and it is bounded in the uniform topology) in $\B(H^{-2} (\R^6), H^2 (\R^6) )$ and its contribution vanishes due to the $\ve$ in front.

\n
The low-energy part $R_0^< (\mu+i \ve )  -  R_0^< (\mu-i\ve)$ is a finite sum of operators, with integral kernel of the form $A_\ve (\x,\y,\x',\y')$. 
Therefore it is sufficient to prove that $A_\ve (\x,\x,\x',\x')$ converges in $\B(L^2 (\R^3))$.
In fact, they are of the form $\phi_\nn \, r_0 (\nu \pm i \ve )\, \phi_\nn$, $\nu>0$ and we have
\[
\| \phi_\nn \lf(  r_0 (\nu \pm i \ve )- r_0^\pm (\nu) \ri) \phi_\nn \|_{ \B(L^2 (\R^3) ) } \leq c  \| \langle x \rangle^s \phi_\nn\|^2_{L^\infty(\R^3)} \| r_0 (\nu \pm i \ve )- r_0^\pm (\nu)\|_{\B(L^2_s (\R^3), L^2_{-s} (\R^3))}
\]
and the r.h.s. goes to zero by Proposition \ref{LAP0}. 
Therefore we have
\begin{align*}
\frac{1}{2\pi i } \lf( R^+ (\mu ) -  R^- (\mu) \ri) 
&=\frac{1}{2\pi i }  Q^\pm (\mu ) \lf( R_0^+ (\mu ) -  R_0^- (\mu)   \ri)Q^{\pm,\ast} (\mu ) \\
& =  \sum_{ |\nn|<n_0}  \f{|\k_n|}{2} \int_{\mathbb S^2}   Q^\pm (\mu ) \Phi_0 (\k_n , \nn) \otimes \overline{ \Phi_0 (\k_n , \nn) } Q^{\pm,\ast} (\mu )   d\boldsymbol{\ome}    \\
&=  \sum_{ |\nn|<n_0}  \f{|\k_n|}{2} \int_{\mathbb S^2}  \Phi_\pm (\k_n , \nn) \otimes \overline{ \Phi_\pm (\k_n , \nn) } d\boldsymbol{\ome} ,
\end{align*}
and this concludes the proof. 
\end{proof}

\vs

\n
We are now in position to prove Theorem \ref{eigexp}. 

\begin{proof}
We know that $\R^+\setminus \NN= \bigcup_j I_j$, where $I_j$ are open intervals. Let us fix an interval $I\subset I_j$ and $u\in \S (\R^6)$.  Then by Stone's formula and Proposition \ref{volta} we have
\begin{align}
\langle E_I (H) u, u \rangle  
&= \frac{1}{2\pi i } \int_I \langle \lf( R^+ (\mu ) -  R^- (\mu) \ri) u, u\rangle d\mu \nonumber \\
&= \int_I \sum_{ |\nn|<n_0} \f{|\k_n (\mu)|}{2}  \int_{\mathbb S^2} \lf| \F_\pm u ( \k_n (\mu), \nn)\ri|^2 d\boldsymbol{\ome} \, d\mu \nonumber\\
&= \int_{\R^3}  \sum_{\nn\in\N_0 }\chi_{\{{k^2+\ome |\nn|} \in I\} } \lf| \F_\pm u ( \k ,\nn)\ri|^2 d\k. \label{extension}
\end{align}
In the last step we changed variable in the integration over $\mu$, introducing $k$ defined by $k^2= \mu - \ome |\nn|$ and $\k= k \, \boldsymbol{\ome}$. Notice that the series over $\nn$ is actually a 
finite sum due to the indicator function.
Since $| \langle E_H (I) u, u \rangle  | \leq \| u \|^2_{L^2(\R^6)} $, we can extend \eqref{extension} in a twofold way: we let $I$ invade $I_j$ and then sum over $j$ and we extend the domain from $\S (\R^6)$ to 
$L^2(\R^6)$. Since the set $\{ k^2+\ome |\nn|\in  \NN\}$ has zero measure in $\R^3$ for $\forall \nn \in \N_0^3$, we obtain  $\F_\pm : L^2(\R^6) \mapsto L^2(\R^3)\otimes \ell^2 (\N_0^3)$ as a bounded operator and 
\beq
\| \F_\pm u \|_{L^2(\R^3)\otimes \N_0^3} = \| E_H ( \R^+ \setminus \NN) u \|_{L^2(\R^6)}.
\eeq

\noindent
Let us now prove \eqref{call1} and \eqref{call2}. Let us fix $I$ compact interval, $I\subset I_j$,  and
let us consider $u$ such that $(\F_0 u)(\k, \nn) =\tilde u _\nn (\k)$ is different from zero for a finite number of $\nn$, $ \tilde u _\nn (\k) \in \C^{\infty}_0 (\R^3) $ and $\text{supp} \,\tilde u_\nn \subset\{ |\k | \in  I-\ome |\nn|\}$.
In particular, we have $E_{H_0} (I) u = u$ and $u \in \S (\R^6)$. Let us also consider  $ v \in \S(\R^6)$. Using the standard stationary representation of the wave operator $W_+$, see \cite{K}, we have
\begin{align*}
\langle W_+ u , v \rangle 
& = \lim_{\ve \downarrow 0} \frac{\ve}{\pi} \int_I \langle R(\mu -i \ve ) \, R_0 (\mu + i \ve ) u, v \rangle d\mu\\
& = \lim_{\ve \downarrow 0} \frac{\ve}{\pi} \int_I \langle Q(\mu -i \ve )\, R_0(\mu -i \ve ) \, R_0 (\mu + i \ve ) u, v \rangle  d\mu \\
& = \lim_{\ve \downarrow 0} \frac{\ve}{\pi} \int_I \langle  R_0(\mu -i \ve ) \, R_0 (\mu + i \ve ) u,  Q^\ast (\mu -i \ve )\, v \rangle  d\mu \\
& = \lim_{\ve \downarrow 0} \frac{1}{2 \pi i } \int_I \langle \lf(  R_0 (\mu + i \ve )- R_0(\mu -i \ve )\ri)  u,  Q^\ast (\mu -i \ve )\, v \rangle  d\mu .
\end{align*}
In order to take the limit in $\ve$ under the integral it is sufficient to repeat the arguments used in Proposition \ref{volta}, to notice that the limit is a continuous function of $\mu$ and to recall that we are integrating over a compact set.

\n
Using \eqref{level2}, \eqref{faenza} and the definition of $\F_0$ and $\F_+$, we have
\begin{align*}
\langle W_+ u , v \rangle 
& = \frac{1}{2 \pi i } \int_I \langle \lf(  R_0^+ (\mu  )- R_0^+(\mu )\ri)  u,  Q^{+,\ast} (\mu  )\, v \rangle  d\mu  \\
&=\int_I d\mu \, \sum_{ |\nn|<n_0} \f{|\k_n|}{2} \int_{\mathbb S^2}     \overline{\langle \Phi_0 (\k_n , \nn), u\rangle} \langle  \Phi_0(\k_n , \nn) , Q^{+,\ast} (\mu  )\, v \rangle  d\boldsymbol{\ome}  \\
&=\int_I d\mu \, \sum_{ |\nn|<n_0}\f{|\k_n|}{2} \int_{\mathbb S^2}     \overline{\langle \Phi_0 (\k_n , \nn), u\rangle} \langle  \Phi_+ (\k_n , \nn) ,\, v \rangle  d\boldsymbol{\ome}  \\
&=\int_I d\mu \, \sum_{ |\nn|<n_0} \f{|\k_n|}{2} \int_{\mathbb S^2} \overline{ (\F_0 u)(\k_n , \nn) } (\F_+ \, v)(\k_n , \nn)  d\boldsymbol{\ome}  \\
&= \int_{\R^3}  \sum_{\nn\in\N_0 }\chi_{\{{k^2}+\ome |\nn|\in I\} } \overline{ (\F_0 u)(\k_n , \nn) } (\F_+ v) ( \k ,\nn) d\k \\
&= \langle \chi_I \F_0 u, \F_\pm v \rangle_{ L^2(\R^3)\otimes \ell^2 (\N_0^3)}.
\end{align*}
In the last step we changed variable in the integration over $\mu$, introducing $k$ defined by $k^2= \mu - \ome |\nn|$ and $\k= k \, \boldsymbol{\ome}$. We can again let $I$ invade $I_j$ and sum over $j$. Since 
the set of functions $u$ such that $\F_0 u$ has support away from $\NN$ is dense in $L^2(\R^6)$, we obtain 
\beq
\langle W_+ u , v \rangle = \langle  \F_0 u, \F_+ v \rangle_{ L^2(\R^3)\otimes \ell^2 (\N_0^3)}.
\eeq
The same argument can be repeated with minor modifications for $W_-$ and \eqref{call1} is proved. 
Properties \eqref{call2} immediately follow from the fact that  the wave operators exist and are complete. This concludes the proof of Theorem  \ref{eigexp}.
\end{proof}

\vs

\section{Born Approximation}
\noindent
In this section we show how one can formally recover Fermi's formula  \eqref{fermi} for the scattering cross-section in the Born approximation.  We plan to perform a more detailed perturbative analysis of the scattering cross-section beyond the Born approximation 
in a forthcoming  paper.

\n
Due to the result of the previous section we can cast the scattering matrix as
\[
S = W_+^\ast W_-=  \F_0^\ast \F_+ \F_-^\ast \F_0 \,.
\]
Passing  to the momentum representation, we write
\[
\hat S = \F_0 S \F_0^\ast =\F_+ \F_-^\ast ,
\]
and we focus on the $\T$-matrix
\[
\hat S = I -2 \pi i \T \qquad \quad   \T =\frac{1}{2\pi i} \F_+ \lf( \F_+^\ast - \F_-^\ast \ri).
\]
For $f,g \in L^2(\R^3)\otimes \ell^2 (\N_0^3)$ we have
\beq
\lf( \F_\pm^\ast f \ri) (\x , \y)= \int_{\R^3} d\k \sum_{\nn \in \N_0} \Phi_\pm ( \k, \nn ,\x,\y )  f(\k,\nn)  
\eeq
and
\beq
\langle g, \T f \rangle = \frac{1}{2\pi i}\int_{\R^6} d\k \, d\k' \sum_{\nn, \nn'} \overline{ g(\k,\nn)} \, f(\k' , \nn') \int_{\R^6}d\x\, d\y \, \overline{ \Phi_+ (\k, \nn ,\x, \y) } \lf(  \Phi_+ (\k', \nn' ,\x, \y) -  \Phi_- (\k', \nn' ,\x, \y) \ri).
\eeq
Therefore the integral kernel of the $\T$-matrix reads
\beq
\T(\k, \nn , \k', \nn')= \frac{1}{2\pi i} \int_{\R^6}d\x\, d\y \, \overline{ \Phi_+ (\k, \nn ,\x, \y) } \lf(  \Phi_+ (\k', \nn' ,\x, \y) -  \Phi_- (\k', \nn' ,\x, \y) \ri).
\eeq
For $\al \to +\infty$, $\langle g, \T f \rangle   \to 0$ and $\hat S\to I$, that is the scattering becomes trivial. 
The Born approximation consists in finding the first non trivial correction of $\T$ for $\alpha$ large.
Notice that the perturbative regime near $H_0$ is achieved for $\al\to +\infty$ not simply $\al \to \infty$ as for $\al\to -\infty$
the Hamiltonian admits infinitely many bound states and the lower bound of $\sigma (H)$ diverges to $-\infty$, see \cite{CDF}.
The crucial point, as \eqref{bari} suggests, is an expansion of $(\Ga^\pm (\mu) + \al)^{-1}$ in power of  $\alpha^{-1}$.
For this purpose, it is sufficient to consider the identity:
\beq \label{pilastro}
(\Ga^\pm (\mu) +\al )^{-1}F  = \frac{1}{\al }F  - \frac{1}{\al } (\Ga^\pm (\mu) +\al )^{-1} \Ga^\pm (\mu)  F \qquad F\in \D (\Ga (-\la) )
\eeq
which  gives
\beq 
\Phi_\pm (\k, \nn )\simeq \Phi_0 (\k, \nn )+ \frac{1}{\al} \G^\pm (E(\k,\nn) )\Tr \Phi_0 (\k, \nn ).
\eeq
In particular the leading term is
\begin{align*}
\T(\k, \nn , \k', \nn') &\simeq \frac{1}{2\pi i \al} \int_{\R^6}d\x\, d\y \, \overline{ \Phi_0 (\k, \nn ,\x, \y) } \lf( \G^+ (E' ) - \G^- (E' ) \ri)\Tr \Phi_0 (\k', \nn' )  \\
&=\frac{1}{2\pi i \al}  \int_{\R^6}d\x\, d\y \, \overline{ \Phi_0 (\k, \nn ,\x, \y) }  \lf( R_0^+ (E' ) - R_0^- (E' ) \ri)\de_\pi  \Phi_0 (\k', \nn' ).
\end{align*}
Writing explicitly all the integrals, we have
\begin{align*}
&\T(\k, \nn , \k', \nn')
 = \frac{1}{i (2\pi)^4 \al }   \int_{\R^{12}}d\x\, d\y \,d\x' \, d\y' \, e^{-i\k\cdot \x} \phi_\nn (\y)  \\
&  \;\;\;\;  \sum_{\mm< E'} \lf( r_0^+ ( E' -\ome|\mm| , \x-\x' ) - r_0^- ( E' -\ome|\mm| , \x-\x' ) \ri) \phi_\mm (\y) \phi_\mm (\y')  \de(\x' -\y') \, e^{i \k'\cdot \x'} \phi_{\nn'}(\y') \\
& = \frac{1}{i(2\pi)^4 \al }   \int_{\R^{9}}d\x\, d\y \,d\x' \,  \, e^{-i\k\cdot \x} \phi_\nn (\y)  \\
& \;\;\;\;  \sum_{\mm< E'} \lf( r_0^+ ( E' -\ome|\mm| , \x-\x' ) - r_0^- ( E' -\ome|\mm| , \x-\x' ) \ri) \phi_\mm (\y) \phi_\mm (\x')   \, e^{i \k'\cdot \x'} \phi_{\nn'}(\x') \\
& = \frac{1}{i (2\pi)^4\al }   \int_{\R^{3}}d\x \, e^{-i\k\cdot \x}  \int_{\R^3} \,d\x'   \, \lf( r_0^+ ( E' -\ome|\nn| , \x-\x' ) - r_0^- ( E' -\ome|\nn| , \x-\x' ) \ri)   e^{i \k'\cdot \x'}  \phi_\nn (\x')  \phi_{\nn'}(\x') \\
&= \frac{1}{\al (2\pi)^{3/2} } \frac{1}{2 \sqrt{E' -\ome|\nn| } } \de( k- \sqrt{E' -\ome|\nn| }   ) \widehat{\phi_\nn \phi_{\nn'} }(\k-\k') \\
& = \frac{1}{\al  (2\pi)^{3/2}} \de( k^2+ \ome|\nn| -k^{\prime 2} -\ome |\nn'|) \widehat{\phi_\nn \phi_{\nn'} }(\k-\k')
 \end{align*}
where first we have performed the  integrations over $\y$ and $\y'$ which  leave us the constrain $\ome |\nn| <E'$, and finally we have computed the Fourier transform and used \eqref{agmontrace}.
In the last step we have just rewritten  the energy conservation in a more transparent way.
For an anelastic process, the cross-section is given, see \cite{HF}, by
\beq
\frac{d\sigma_{\nn' \to \nn} }{d \Omega } = \f{|\k|}{|\k'|} \lf| f_{\nn' \to \nn} (\k , \k') \ri|^2.
\eeq
where $ f_{\nn' \to \nn}$ is the scattering amplitude. With our normalization, see \cite{GP, RS}, we have
\beq  
 f_{\nn' \to \nn}(\k , \k') = 2 \pi^2 \T(\k, \nn , \k', \nn').
\eeq
Let us now establish the connection between our strength parameter  $\al$ and the parameter $a$ used by Fermi. 
 For a point interaction in three dimension in the Born approximation,  we have
\beq 
\frac{d\sigma}{d\Omega} = \f{1}{(4\pi \al)^2}.
\eeq
See \cite{Al} for the expression of the scattering amplitude in this case. Comparing with equation (83) of \cite{F}, we obtain that 
\beq 
4 a^2 =  \f{1}{(4\pi \al)^2}  \,. 
\eeq
Notice that on the l.h.s. we have $(2 a)^2$ instead of $a^2$ because in Fermi's analysis $\, 2 a \,$ is the scattering length for the relative motion of the neutron-proton system. 
Thus we find  
\beq \label{quasi}
\frac{d\sigma_{\nn' \to \nn} }{d \Omega}= 4 a^2 \f{|\k|}{|\k'|}  \lf| \int_{\R^3} e^{-i(\k-\k')\cdot \x } \phi_\nn (\x)  \phi_{\nn'} (\x)  d\x \ri|^2
\eeq
which amounts to formula (81) of \cite{F}. In order to arrive at \eqref{fermi}, it is sufficient to follow the arguments used by Fermi.
We sketch them for the reader's sake. We fix $\nn'=0$ and use the integral
\beq
\int_\R e^{ixy} H_n (x) e^{-x^2} dx = \sqrt{\pi} (i \, y)^n e^{-y^2/4}
\eeq
so that \eqref{quasi} reads
\beq \label{daidaidai}
\frac{d\sigma_{{\mathbf 0} \to \nn} }{d \Omega}= 4 a^2 \f{|\k|}{|\k'|} \f{1}{\ome^{|\nn|} } \frac{(\k-\k')_x^{2n_1} }{n_1 !}      \frac{(\k-\k')_y^{2n_2} }{n_2 !}   \frac{(\k-\k')_z^{2n_3} }{n_3 !} e^{ \frac{(\k-\k')^2}{\ome} } .
\eeq
If concentrate on the energy of the final state of the harmonic oscillator and define
\beq
\frac{d\sigma_{ \nn} }{d \Omega} = \sum_{\substack{\mm\in \N_0 \\ |\mm| = |\nn|}} \frac{d\sigma_{{\mathbf 0} \to \mm} }{d \Omega},
\eeq
we obtain 
\beq
\frac{d\sigma_{ \nn} }{d \Omega}=  4 a^2 \f{|\k|}{|\k'|} \f{(\k-\k')^{2|\nn|}}{\ome^{|\nn|}|\nn|! }  e^{ \frac{(\k-\k')^2}{\ome} } 
\eeq
which is \eqref{fermi}. Equations \eqref{fermi2} and \eqref{fermi3} immediately follows from \eqref{fermi}.

\end{document}